\definecolor{darkblue}{rgb}{0.0,0.0,0.55}
\begin{document}

\title{Systematic Testing of the Data-Poisoning Robustness of KNN}

\author{Yannan Li}
\affiliation{%
  \institution{University of Southern California}
  \city{Los Angeles}
  \country{United States}}

\author{Jingbo Wang}
\affiliation{%
  \institution{University of Southern California}
  \city{Los Angeles}
  \country{United States}}

\author{Chao Wang}
\affiliation{%
  \institution{University of Southern California}
  \city{Los Angeles}
  \country{United States}}


\begin{abstract}
Data poisoning aims to compromise a machine learning based software component by contaminating its training set to change its prediction results for test inputs. 
Existing methods for deciding data-poisoning robustness have either poor accuracy or long running time and, more importantly, they can only certify some of the truly-robust cases, but remain \emph{inconclusive} when certification fails.  In other words, they cannot falsify the truly-non-robust cases. 
To overcome this limitation, we propose a systematic testing based method, which can falsify as well as certify data-poisoning robustness for a widely used supervised-learning technique named \emph{$k$-nearest neighbors (KNN)}.
Our method is faster and more accurate than the baseline enumeration method, due to a novel over-approximate analysis in the abstract domain, to quickly narrow down the search space, and systematic testing in the concrete domain, to find the actual violations.   
We have evaluated our method on a set of supervised-learning datasets. Our results show that the method significantly outperforms state-of-the-art techniques, and can decide data-poisoning robustness of KNN prediction results for most of the test inputs. 
\end{abstract}

\begin{CCSXML}
<ccs2012>
   <concept>
       <concept_id>10011007.10011074.10011099.10011692</concept_id>
       <concept_desc>Software and its engineering~Formal software verification</concept_desc>
       <concept_significance>500</concept_significance>
       </concept>
 </ccs2012>
\end{CCSXML}

\begin{CCSXML}
<ccs2012>
   <concept>
       <concept_id>10002978.10002986.10002990</concept_id>
       <concept_desc>Security and privacy~Logic and verification</concept_desc>
       <concept_significance>500</concept_significance>
       </concept>
 </ccs2012>
\end{CCSXML}

\begin{CCSXML}
<ccs2012>
   <concept>
       <concept_id>10010147.10010257.10010258.10010259</concept_id>
       <concept_desc>Computing methodologies~Supervised learning</concept_desc>
       <concept_significance>500</concept_significance>
       </concept>
 </ccs2012>
\end{CCSXML}

\ccsdesc[500]{Software and its engineering~Formal software verification}
\ccsdesc[500]{Security and privacy~Logic and verification}
\ccsdesc[500]{Computing methodologies~Supervised learning}

\keywords{Data Poisoning, Robustness, Certification, Nearest Neighbors, Abstract Interpretation, Testing}

\maketitle

\section{Introduction}
\label{sec:intro}

Testing and verification have always been an integral part of software engineering and, for critical components, rigorous formal analysis techniques are frequently used, either in addition to or together with testing, to ensure that important properties are satisfied.  
With the increasing utilization of machine learning techniques in practical software systems, testing and verification of software components that use machine learning have become important research problems.
Since conventional techniques for testing and verification focus primarily on the software code itself, as opposed to models learned from the data (which are often more important in machine learning based components), there is an urgent need for developing new testing and verification techniques for these emerging software components.

In this paper, we focus on the testing and verification of a security property called \emph{data-poisoning robustness}.
Data poisoning is a type of emerging security risk where the attacker compromises a machine learning based software component by contaminating its training data. 
Specifically, the attacker aims to change the result of a prediction model by injecting a small amount of malicious data into the training set used to learn this model.
Such attacks are possible, for example, when training data elements are collected from online repositories or gathered via crowd-sourcing.  Prior studies have shown the effectiveness of these attacks, e.g., in malware detection systems~\cite{xiao2015feature} and facial recognition systems~\cite{chen2017targeted}.

Faced with such a risk, users may be interested in knowing if the result generated by a \emph{potentially-poisoned} prediction model is still \emph{robust}, i.e., the prediction result remains the same regardless of \emph{whether or how} the training set may have been poisoned by up-to-$n$ data elements~\cite{drews2020proving}. 
This is motivated, for example, by the following use case scenario: the model trainer collects data elements from potentially malicious sources but is confident that the number of potentially-poisoned elements is bounded by $n$; and despite the risk, the model trainer wants to use the learned model to make a prediction for a new test input.  
If we can certify the robustness, the prediction result can still be used; this is called robustness \emph{certification}.
If, on the other hand, we can find a possible scenario that violates the robustness property, the prediction result is discarded; this is called robustness \emph{falsification}.
%
%
Therefore, the robustness falsification and certification problems are analogous to the software testing and verification problems: falsification aims to detect violations of a property, while certification aims to prove that such violations do not exist.

Conceptually, the problem of deciding data-poisoning robustness can be solved as follows.  
First, we assume that the training set $T$ consists of both clean and poisoned data elements, but which of the \emph{up-to-}$n$ data elements are poisoned remains unknown.  
Based on the training set $T$, we use a machine learning algorithm $L$ to obtain a model $M=L(T)$ and then use the model to predict the output class label $y=M(x)$ for a test input $x$. 
Next, we check if the prediction result \emph{could have been different} by removing the poisoned elements from $T$.  
%
Assuming that exactly $1\leq i\leq n$ of the $|T|$ data elements are poisoned, where $n$ is the poisoning threshold, the clean subset $T'\subset T$ will have the remaining $(|T|-i)$ elements.   Using $T'$ to learn the model $M'=L(T')$, we could have predicted the result $y'=M'(x)$. 
Finally, by comparing all of the possible $y'$ with $y$, we decide if prediction for the (unlabeled) test input $x$ is robust: the prediction result is considered robust if and only if,  for all $1\leq i\leq n$,  $y'$ is the same as the default prediction result $y$.

While the solution presented above (called the baseline approach) is a useful mental model,  as an algorithm it is not efficient enough for practical use.  
This is because for a given training set $T$, the number of possible clean subsets ($T'\subset T$) can be as large as $\Sigma_{i=1}^{n} {|T| \choose i}$.  
To see why this is the case, assume that the actual poisoning number $i$ may be any of $1,2,\ldots,n$.  For each specific $i$ value, there are ${|T| \choose i}$ ways of choosing $i$ elements from the $|T|$ elements.  By adding up the numbers for all possible $i$ values, we have $\Sigma_{i=1}^n {|T| \choose i}$.
Due to this combinatorial explosion, it is practically impossible to enumerate all the clean subsets and then check if they all generate the same result as $y=M(x)$.
To avoid the combinatorial explosion, we propose a more efficient method for deciding $n$-poisoning robustness. Instead of enumerating the clean subsets ($T'\subset T$), we use an over-approximate analysis to either verify robustness quickly or narrow down the search space, and in the latter case, rely on systematic testing in the narrowed search space to find a subset $T'$ that can violate robustness.

Our method that combines \emph{quick certification} with \emph{systematic testing} is designed for a supervised learning technique called the  \emph{$k$-nearest neighbors} (KNN) algorithm.
Compared to many other supervised learning techniques, including decision trees and deep neural networks, KNN does not have the high computational cost associated with model training. 
Thus, it has been widely used in software systems to implement classification tasks, including commercial video recommendation systems, document categorization systems, and anomaly detection systems~\cite{guo2003knn,andersson2020predicting,wu2011drex,adeniyi2016automated}. 
KNN is vulnerable to data-poisoning because, in many of these systems, the training data are collected from online repositories or via crowd-sourcing, and thus may be manipulated.

However,  deciding the $n$-poisoning robustness of KNN is a challenging task.
This is because the KNN algorithm has two phases: the learning phase and the prediction phase.  During the learning phase ($K$-parameter tuning phase),  the entire training set $T$ is used to compute the optimal value of parameter $K$ such that, if the most frequent label among the $K$-nearest neighbors of an input is used to generate the prediction label, the average prediction error will be minimized.  Here, the prediction error is computed over data elements in $T$ using a technique called \emph{$p$-fold cross validation} (see Section~\ref{sec:knn}) and the distance used to define nearest neighbors may be the Euclidean distance in the input vector space. 
As a result, the learning phase itself can be time-consuming,  e.g., computing the optimal $K$ for the MNIST dataset with $|T|=$60,000 elements may take 30 minutes, while computing the prediction result for a test input may take less than a minute. 
The large size of $T$ and the complex nature of the mathematical computations make it difficult for conventional software testing and verification techniques to accurately decide the robustness of the KNN system.

To overcome these challenges, we propose three novel techniques. 
First, we propose an over-approximate analysis to certify $n$-poisoning robustness in a sound but incomplete manner.  That is, if the analysis says that the default result $y=M(x)$ is $n$-poisoning robust, the result is guaranteed to be robust. However, this \emph{quick certification} step may return \emph{unknown} and thus is incomplete. 
Second, we propose a search space reduction technique, which analyzes both the learning and the prediction phases of the KNN algorithm in an abstract domain, to extract common properties that all potential robustness violations must satisfy, and then uses these common properties to narrow down the search space in the concrete domain.
Third, we propose a systematic testing technique for the narrowed search space, to find a clean subset $T'\subset T$ that violates the robustness property.  During systematic testing, incremental computation techniques are used to  reduce the computational cost.

We have implemented our method as a software tool that 
takes as input the potentially-poisoned training set $T$, the poisoning threshold $n$, and a test input $x$. The output may be \emph{Certified}, \emph{Falsified} or \emph{Unknown}.  Whenever the output is \emph{Falsified}, a subset $T'\subset T$ is also returned as evidence of the robustness violation. 
We evaluated the tool on a set of benchmarks collected from the literature.  
For comparison, we also applied three alternative approaches.
The first one is the baseline approach that explicitly enumerates all subsets $T'\subset T$. 
The other two are existing methods by Jia et al.~\cite{jia2020certified} and Li et al.~\cite{li2022proving} which only partially solve the robustness problem:  Jia et al.~\cite{jia2020certified} do not analyze the KNN learning phase at all, and thus require the optimal parameter $K$ to be given manually; and both Jia et al.~\cite{jia2020certified} and Li et al.~\cite{li2022proving} focus only on certification in that they may return \emph{Certified} or \emph{Unknown}, but not \emph{Falsified}.  

The benchmarks used in our experimental evaluation are six popular machine learning datasets.  Two of them are small enough that the ground truth (robust or non-robust) may be obtained by the baseline enumerative approach, and thus are useful in evaluating the accuracy of our tool.  The others are larger datasets, e.g., with up to 60,000 training elements and 10,000 test elements, which are useful in evaluating the efficiency of our method. 
The experimental results show that our method can fully decide (either certify or falsify) robustness for the vast majority of test inputs.  

Furthermore, among the four competing methods, our method has the best overall performance. 
Specifically, our method is as accurate as the ground truth (obtained by applying the baseline enumerative approach to small benchmarks) while being significantly faster than the baseline approach. 
Compared with the other two existing methods~\cite{jia2020certified,li2022proving}, our method is significantly more accurate.  For example, on the \emph{CIFAR10} dataset with the poisoning threshold $n=$150, our method successfully resolved 100\% of the test cases, while Li et al.~\cite{li2022proving} resolved only 36.0\%, and Jia et al.~\cite{jia2020certified} resolved only 10.0\%.

To summarize, this paper makes the following contributions:
\begin{itemize}
\item 
We propose the first method capable to \emph{certifying} as well as \emph{falsifying} $n$-poisoning robustness of the entire state-of-the-art KNN system, including both the learning phase and the prediction phase. 
\item
We propose techniques to keep our method accurate as well as efficient, by using over-approximate analysis in the abstract domain to narrow down the search space before using systematic testing to identify violations in the concrete domain.
\item
We implement our method as a software tool and evaluate the tool on six popular supervised-learning datasets to demonstrate the advantages of our method over two state-of-the-art techniques. 
\end{itemize}

The remainder of this paper is organized as follows. 
First, we introduce the technical background in Section~\ref{sec:motivation}.
Then, we present an overview of our method in Section~\ref{sec:alg}, followed by our quick certification subroutine in Section~\ref{sec:quickrobust}, our falsification subroutine in Section~\ref{sec:constraint}, and our incremental computation subroutine in Section~\ref{sec:difflearn}. 
Next, we present the experimental results in Section~\ref{sec:expr}. 
We review the related work in Section~\ref{sec:related}. 
Finally, we give our conclusions in Section~\ref{sec:conclusion}.

\section{Background}
\label{sec:motivation}

In this section, we use two examples to motivate our work and then
highlight the challenges in deciding $n$-poisoning robustness.

\subsection{Two Motivating Examples}

First, let us assume that the potentially-poisoned training set $T$ may be partitioned into $T'$ and $(T\setminus T')$, where $T'$ consists of the clean data elements and $(T\setminus T')$ consists of the poisoned data elements. The KNN's parameter $K$ indicates how many neighbors to consider when predicting the class label for a test input $x$. For example, $K=3$ means that the predicted label of $x$ is the most frequent label among the 3-nearest neighbors of $x$ in the training set.

One of the two ways in which poisoned data may affect the classification result is called \emph{direct influence}. In this case,  the poisoned elements directly change the $K$-nearest neighbors of $x$ and thus the most frequent label, as shown in Figure~\ref{fig:ex1}.

Figure~\ref{fig:ex1}(a) shows only the clean subset $T'$, where the \emph{triangle}s and \emph{star}s represent the training data elements, and the \emph{square} represents the test input $x$.  Furthermore, \emph{triangle} and \emph{star} represent the two distinct output class labels.
The goal is to predict the output class label of the test input $x$. In this figure, the dashed circle contains the 3-nearest neighbors of $x$.  Since the most frequent label is \emph{star},  $x$ is classified as \emph{star}.

Figure~\ref{fig:ex1}(b) shows the entire training set $T$, including all of the elements in $T'$ as well as a poisoned data element.   In this figure, the dashed circle contains the 3-nearest neighbors of $x$.  Due to the poisoned data element, the most frequent label becomes \emph{triangle} and, as a result, $x$ is  mistakenly classified as \emph{triangle}.

The other way in which poisoned data may affect the classification result is called \emph{indirect influence}.  In this case, the poisoned elements may not be close neighbors of $x$, but their presence in $T$ changes the parameter $K$ (Section \ref{sec:knn} explains how to compute K), and thus the prediction label.

Figure~\ref{fig:ex2} shows such an example where the poisoned element is not one of the 3-nearest neighbors of $x$.  However,  its presence changes the parameter $K$ from 3 to 5 in Figure~\ref{fig:ex2}(b).  As a result,  the predicted label for $x$ is changed from \emph{star} in Figure~\ref{fig:ex2}(a) to \emph{triangle} in Figure~\ref{fig:ex2}(b).

The existence of \emph{indirect influence} prevents us from verifying robustness by only considering the cases where poisoned elements are near $x$ (which is the unsound approach of Jia et al.~\cite{jia2020certified}); instead, we must consider  each $T' \in \Delta_n(T)$.

\begin{figure}
 \centering
	\begin{subfigure}[b]{0.22\textwidth}
	 \centering
	\includegraphics[width=.9\textwidth]{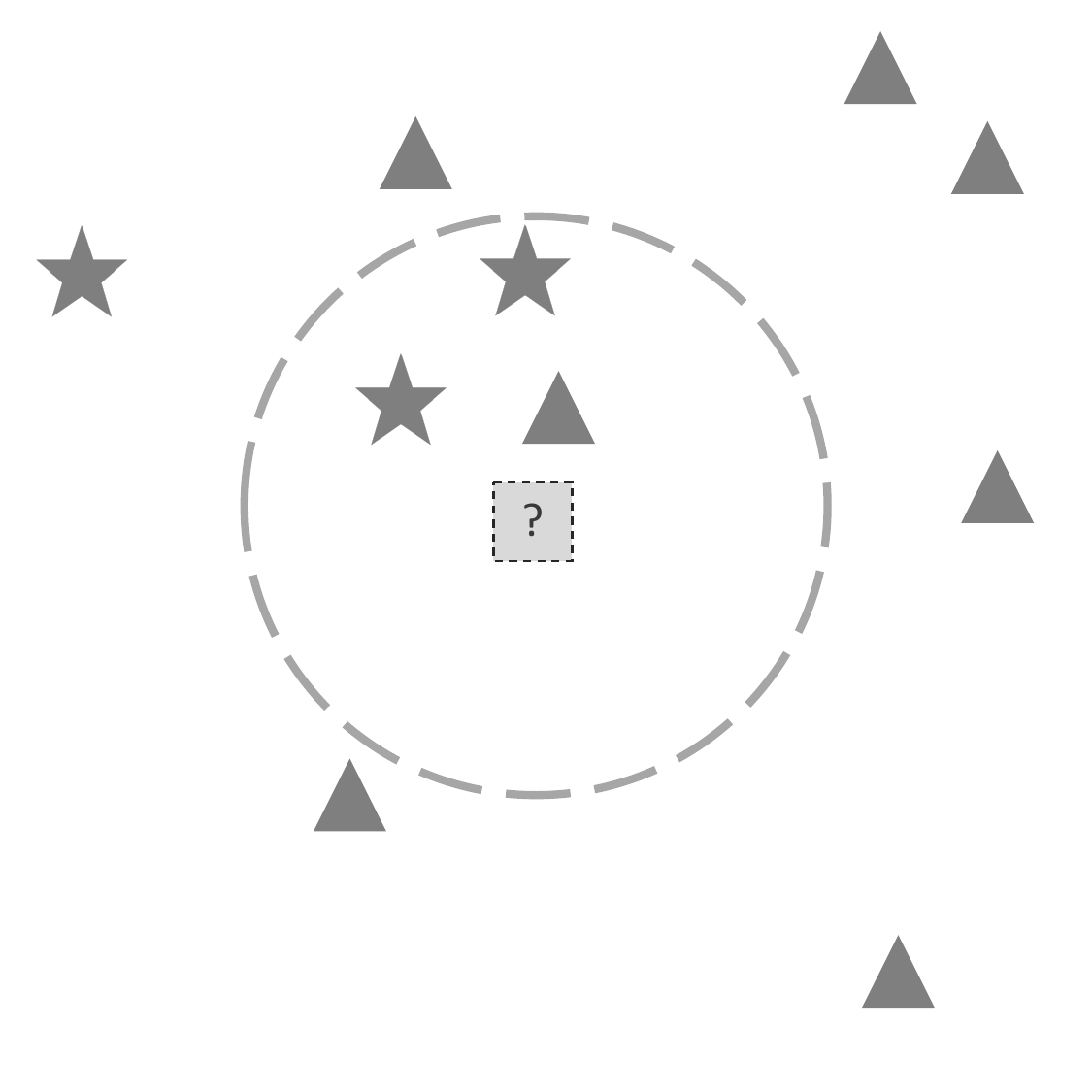}
	\caption{\scriptsize clean subset $T'$ ($K$=3)}
	\end{subfigure}
\hfill
	\begin{subfigure}[b]{0.22\textwidth}
	 \centering
	\includegraphics[width=.9\textwidth]{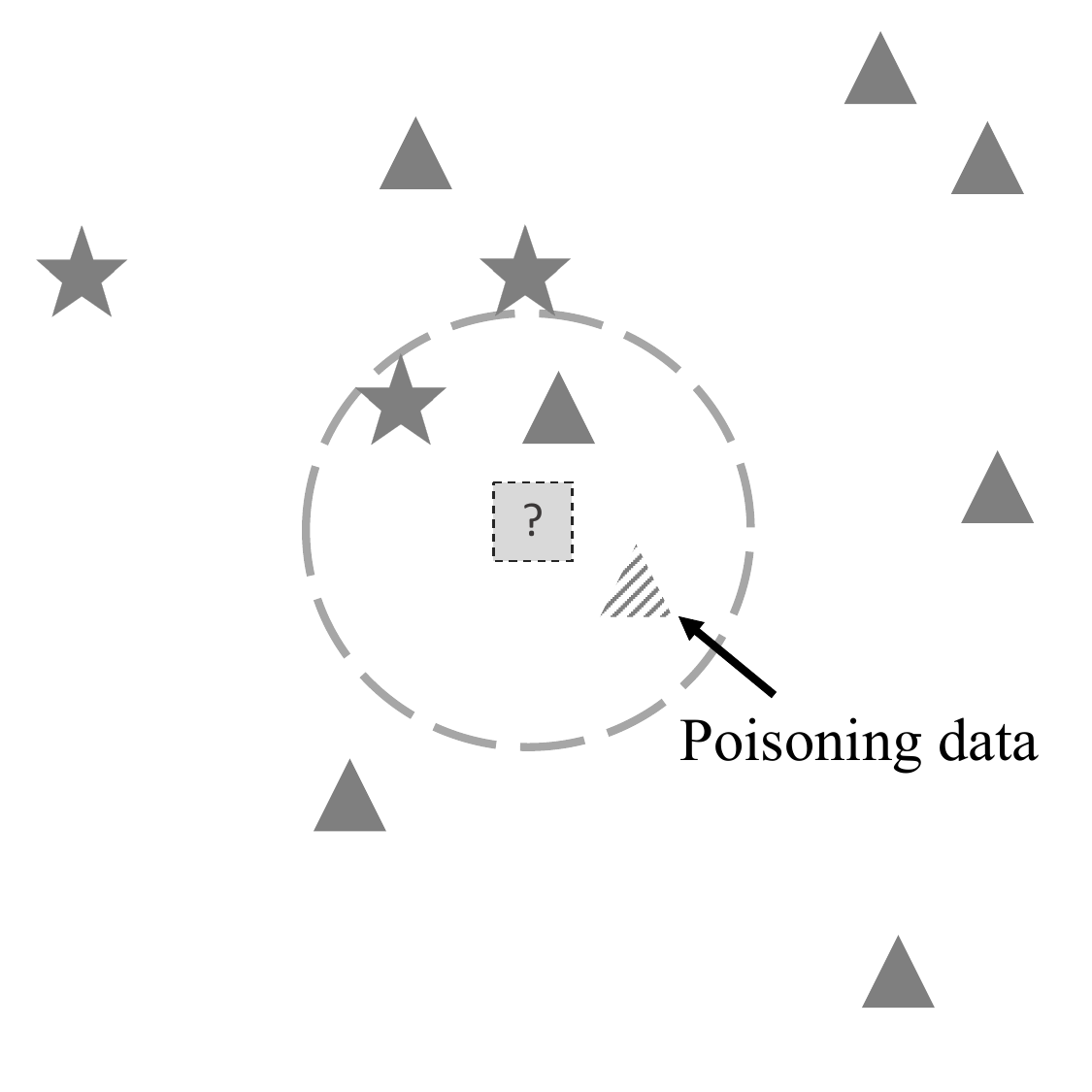}
	\caption{\scriptsize poisoned set $T$ ($K$=3)}
	\end{subfigure}
\caption{Example of \emph{direct influence} by poisoning data.}
\label{fig:ex1}
\end{figure}

\begin{figure}
 \centering
	\begin{subfigure}[b]{0.22\textwidth}
	 \centering
	\includegraphics[width=.9\textwidth]{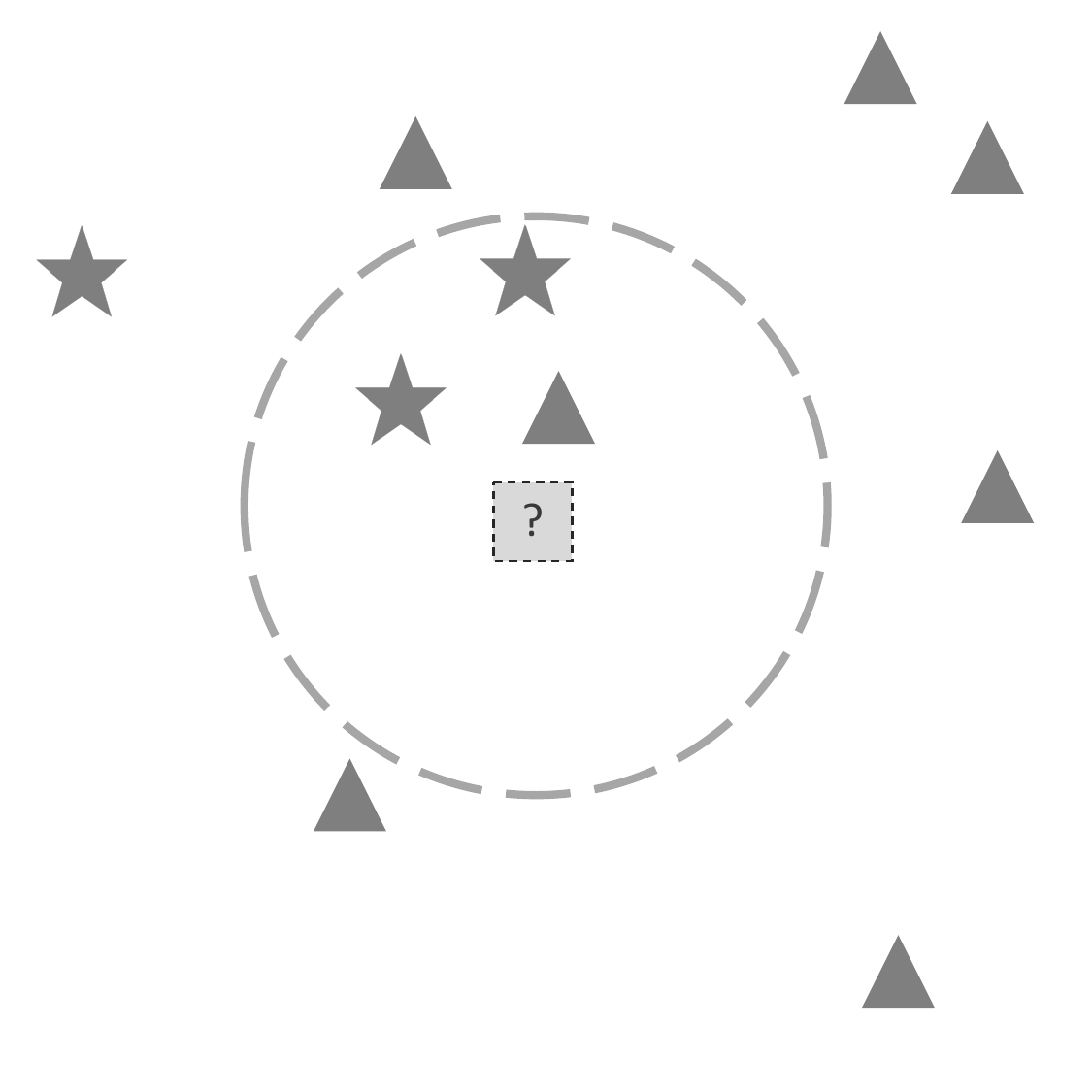}
	\caption{\scriptsize clean subset $T'$ ($K$=3)}
	\end{subfigure}
\hfill
	\begin{subfigure}[b]{0.22\textwidth}
	 \centering
	\includegraphics[width=.9\textwidth]{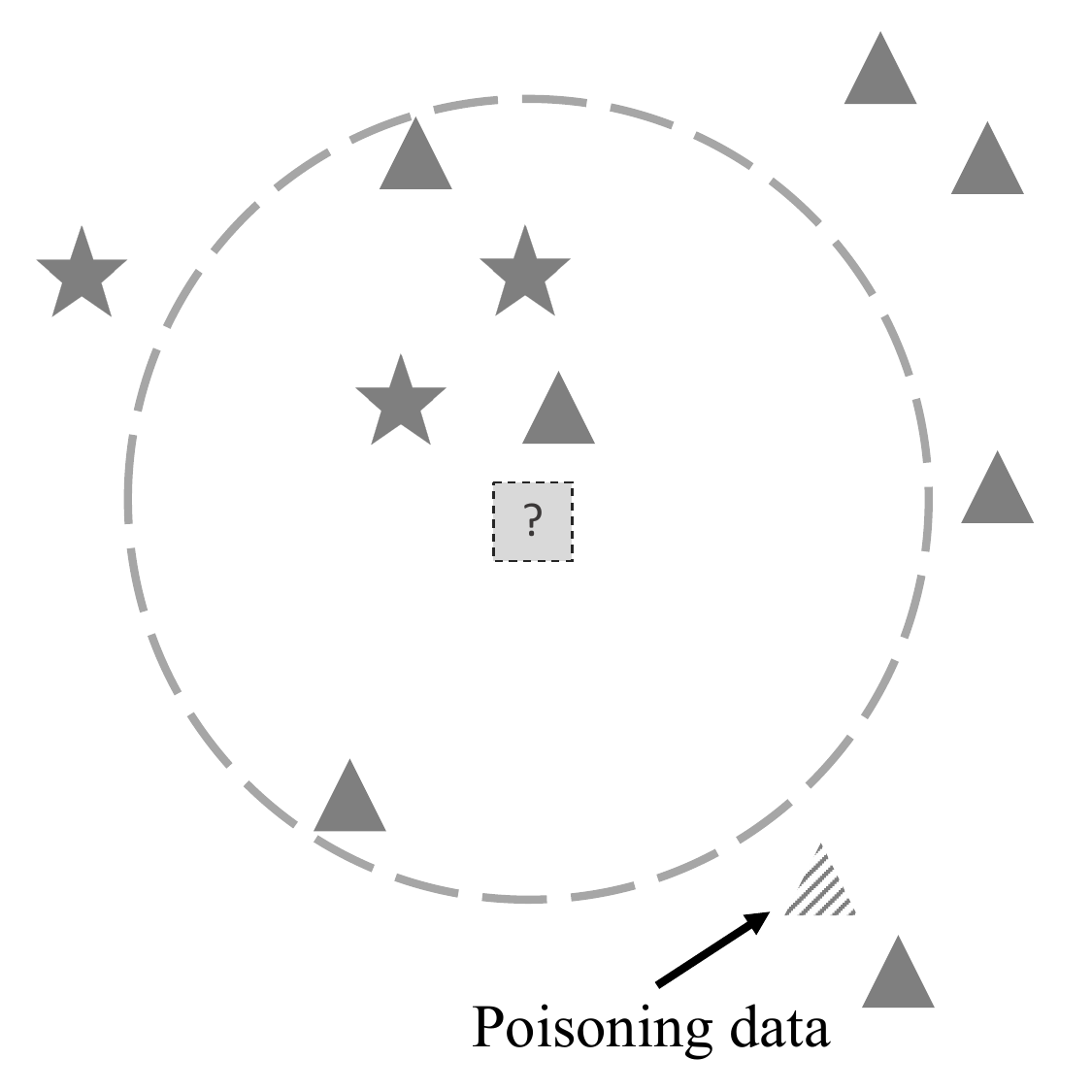}
	\caption{\scriptsize poisoned set $T$ ($K$=5)}
	\end{subfigure}
\caption{Example of \emph{indirect influence} by poisoning data.}
\label{fig:ex2}
\end{figure}

\subsection{The $k$-Nearest Neighbors (KNN)}
\label{sec:knn}

Let $L$ be a learning algorithm, $M=L(T)$, which takes a set $T = \{(x,y)\}$ of labeled elements as input and returns a model $M$ as output.  
Inside $T$, each $x\in \mathcal{X} \subseteq \mathbb{R}^D$ is a vector in the $D$-dimensional input feature space $\mathcal{X}$, and each $y\in \mathcal{Y} \subseteq \mathbb{N}$ is a class label in the output label space $\mathcal{Y}$.
The model is a function $M : \mathcal{X} \rightarrow \mathcal{Y}$ that maps a test input $x\in\mathcal{X}$ to a class label $y\in\mathcal{Y}$. 

The KNN algorithm consists of two phases.  In the learning phase, the labeled data in $T$ are used to compute the optimal value of the parameter $K$. In the predication phase,  an unlabeled input $x\in \mathcal{X}$ is classified as the most frequent label among the $K$ nearest neighbors of $x$ in $T$.
The distance used to decide $x$'s neighbors in $T$ may be measured using several metrics. In this work, we use the most widely adopted Euclidean distance in the input feature space $\mathcal{X}$.  

%
%
To compute the optimal $K$ value, state-of-the-art KNN implementations iterate through all possible candidate values in a reasonable range, e.g., $1\sim 5000$, and use a technique called \emph{$p$-fold cross validation} to identify the optimal value.
The optimal $K$ value is the one that has the smallest average prediction error.
During $p$-fold cross validation, $T$ is randomly divided into $p$ groups of approximately equal size. Then, for each candidate $K$ value, the prediction error of each group is computed, by treating this group as a test set and the union of all the other $p-1$ groups as the training set. 
Finally, the prediction errors of the individual groups are used to compute the average prediction error among all $p$ groups.

\subsection{The $n$-Poisoning Robustness}

We follow the definition given by Drews et al.~\cite{drews2020proving}, which was introduced initially for models such as decision tree~\cite{MeyerAD21} and linear regression~\cite{abs-2206-03575} but was also applied to KNN~\cite{li2022proving}.
It has a significant advantage: the definition can be applied to \emph{unlabeled} data, since robustness does not depend on the actual label of the test input $x$.  This is important because the actual label of the test input (i.e., the ground truth) is often unknown in practice. 

Given a potentially-poisoned training set $T$ and a poisoning threshold $n$ indicating the maximal poisoning count, the set of possible clean subsets of $T$ is represented by
$\Delta_n(T)=\{T' \mid T'\subset T \mbox{ and } |T\setminus T'|\leq n\}$.  That is, $\Delta_n(T)$ captures all possible situations where the poisoned elements are eliminated from $T$.

We say the prediction $y=M(x)$ for a test input $x$ is robust if and only, for all $T'\in\Delta_n(T)$ such that $M'=L(T')$ and $y'=M'(x)$, we have $y'=y$.
In other words, the default result $y=M(x)$ is the same as all of the possible results, $y'=M'(x)$, no matter which are the  ($i\leq n$) poisoned data elements  in the training set $T$.

\subsection{The Baseline Method}

We first present the \emph{baseline} method in Algorithm~\ref{alg:baseline}, and then compare it with our proposed method in Algorithm~\ref{alg:our_method} (Section~\ref{sec:alg}).

\begin{algorithm}[t]
\caption{\makebox{Procedure \textsc{Falsify\_Baseline}$(T, n, x)$.}}
\label{alg:baseline}
{\footnotesize
$K \gets \textsc{KNN\_learn}(T)$\\
$y \gets \textsc{KNN\_predict}(T,K, x)$ \\
$\Delta_n(T) \gets \{T' \mid T'\subset T \mbox{ and } |T\setminus T'|\leq n\}$\\
\While {$\Delta_n(T) \neq \emptyset$ $\wedge$ consumed\_time $<$ time\_limit} {
        Remove a clean subset $T'$ from  $\Delta_n(T)$\\
        $K' \gets  \textsc{KNN\_learn}(T')$\\
        $y' \gets  \textsc{KNN\_predict}(T',K', x)$\\
        \If {$y \neq y'$} { \Return \emph{Falsified} with ($T \setminus T'$) as evidence}
        
}
\eIf {$\Delta_n(T) = \emptyset$} { \Return \emph{Certified}}
{\Return \emph{Unknown}}
}
\end{algorithm}

The baseline method explicitly enumerates the possible clean subsets $T'\in\Delta_n(T)$ to check if the prediction result $y'$ produced by $T'$ is the same as the prediction result $y$ produced by $T$ for the given input $x$. 
As shown in Algorithm~\ref{alg:baseline}, the input consists of the training set $T$, the poisoning threshold $n$, and the test input $x$.  The subroutines \textsc{KNN\_learn} and \textsc{KNN\_predict} implement the \emph{standard} learning and prediction phases of the KNN algorithm.
Without the time limit, the baseline method would be both sound and complete; in other words, 
it would return either \emph{Certified} (Line~13) or \emph{Falsified} (Line~9).
With the time limit, however, the baseline method  will return \emph{Unknown} (Line~15) after it times out.

The baseline procedure is inefficient for three reasons.
First, it is a \emph{slow certification} (Line~13) to check whether the prediction result for $x$ remains the same for all possible clean subsets $T'\in\Delta_n(T)$. In many cases, the elements around $x$ are almost all from one class, and thus $x$'s predicted label cannot be changed by either direct or indirect influence.  However, the baseline procedure cannot quickly identify and exploit this to avoid enumeration. 
Second, even if a violating subset $T'$ exists, the vast majority of subsets in $\Delta_n(T)$ are often non-violating.  However, the baseline procedure cannot quickly identify the violating $T'$ from $\Delta_n(T)$.
Third, within the while-loop, different subsets share common computations inside \textsc{KNN\_learn}, but these common computations are not leveraged by the baseline procedure to reduce the computational cost.

\section {Overview of The Proposed Method}
\label{sec:alg}

There are three main differences between our method in Algorithm~\ref{alg:our_method} and the baseline method in Algorithm~\ref{alg:baseline}. They are marked in dark blue.  They are the novel components designed specifically to overcome limitations of the baseline method.

First, we add the subroutine \textsc{QuickCertify} to quickly check whether it is possible to change the prediction result for the test input $x$. This is a sound but incomplete check in that, if the subroutine succeeds, we guarantee that the result is robust.  If it fails, however, the result remains unknown and we still need to execute the rest of the procedure. The detailed implementation of \textsc{QuickCertify} is presented in Section~\ref{sec:quickrobust}.

Second, before searching for a clean subset that violates robustness, we compute $\nabla_n^x(T) \subseteq \Delta_n(T)$, to capture the \emph{likely violating} subsets.  In other words, the \emph{obviously non-violating} ones in $\Delta_n(T)$ are safely skipped.  Note that, while $\Delta_n(T)$ depends only on $T$ and $n$, $\nabla_n^x(T)$ depends also on the test input $x$.  For this reason, $\nabla_n^x(T)$ is expected to be significantly smaller than $\Delta_n(T)$, thus reducing the search space.   The detailed implementation of \textsc{GenPromisingSubsets} is presented in Section~\ref{sec:constraint}.

Third, instead of applying the standard \textsc{KNN\_learn} subroutine to each subset $T'$ to perform the expensive $p$-fold cross validation,  we split it to \textsc{KNN\_learn\_init} and \textsc{KNN\_learn\_update}, where the first subroutine is applied only once to the original training set $T$,  and the second subroutine is applied to each subset $T'\in\nabla_n^x(T)$.  Within \textsc{KNN\_learn\_update}, instead of performing $p$-fold cross validation for $T'$ from scratch,  we leverage the results returned by \textsc{KNN\_learn\_init} to incrementally compute the results for $K'$.  The detailed implementation of these two new subroutines is presented in Section~\ref{sec:difflearn}.

\begin{algorithm}[t]
\caption{\makebox{Our new procedure \textsc{Falsify\_New}$(T, n, x)$.}}
\label{alg:our_method}
{\footnotesize
\If {$\textcolor{darkblue}{\textsc{QuickCertify}(T, n, x)}$} {\Return \emph{Certified} }
$\langle K, \textcolor{darkblue}{Error} \rangle \gets \textcolor{darkblue}{\textsc{KNN\_learn\_init}}(T)$\\
$ y \gets  \textsc{KNN\_predict}(T, K, x)$\\
$\nabla_n^x(T) \gets  \textcolor{darkblue}{\textsc{GenPromisingSubsets}}(T, n, \textcolor{darkblue}{ x, y})$\\
\While {$\nabla_n^x(T) \neq \emptyset$ $\wedge$ consumed\_time $<$ time\_limit} {
	Remove a subset $T'$ from $\nabla_n^x(T)$\\
	$K' \gets \textcolor{darkblue}{\textsc{KNN\_learn\_update}}(T \setminus T', \textcolor{darkblue}{Error})$\\
	$y' \gets \textsc{KNN\_predict}(T', K', x)$\\
	\If {$y \neq y'$} { \Return \emph{Falsified} with ($T \setminus T'$) as evidence}
}
\eIf {$\nabla_n^x(T) = \emptyset$} { \Return \emph{Certified} }
{\Return \emph{Unknown} }
}
\end{algorithm}

\textcolor{black}{
To summarize, our method first uses over-approximation to certify robustness. If it succeeds, the classification result is guaranteed to be robust; otherwise, the classification result remains unknown. Only for the unknown case, our method uses under-approximation to falsify robustness. If it succeeds, the classification result is guaranteed to be not robust. Otherwise, the classification result remains unknown.
Therefore, our method does not ``mix'' over- and under-approximations in the sense that they are never used simultaneously; instead, over- and under-approximations are used sequentially in two separate steps of our algorithm.
The formal guarantee is that: If our method says that a case is robust, it is indeed robust (see Theorem~\ref{thm:quick}); if our method says that a case is not robust, it is indeed not robust (since a poisoning set is found); and if our method says unknown, it may be either robust or not robust.
}

\section{Quickly Certifying Robustness}
\label{sec:quickrobust}

In this section, we present the subroutine \emph{QuickCertify}, which is a \emph{sound but incomplete} procedure for certifying robustness of the KNN for a given input $x$.  Therefore, if it returns \texttt{True}, the prediction result for $x$ is guaranteed to be robust.  If it returns \texttt{False}, however, we still need further investigation.

\begin{table}
\centering
\caption{Notations used in our new algorithm.}
\label{tbl:notations}
\scalebox{0.93}{
\footnotesize
\begin{tabular}{p{1.75cm}p{6.5cm}}
  \hline

\textbf{Training Set $T$} & 
Let $T = \{(x_1, y_1), (x_2, y_2),$ ..., $(x_m,y_m)\}$ be a set of labeled data elements, where input $x_i \in \mathcal{X} \subseteq \mathbb{R}^D$ is a feature vector in the feature space $\mathcal{X}$, and $y\in \mathcal{Y} \subseteq \mathbb{N}$ is a class label in the label space $\mathcal{Y}$.
\\\hline

\textbf{Set of $K$-nearest Neighbors $T_x^K$} &
Let $T_x^K$ be the set of $K$ nearest neighbors of test input $x$ in the training set $T$. 
\\
& \\\hline

\textbf{Label Counter $\mathcal{E}(\cdot)$}&
Let $\mathcal{E}(D) = \{~(l_i:\#l_i)~ \}$ be the set of label counts for a dataset $D$, where $l_i \in \mathbb{Y}$ is a label and $\#l_i \in \mathbb{N}$ is the number of elements in $D$ with label $l_i$.
\\
& \\\hline

\textbf{Most Frequent Label $Freq(\cdot)$} & 
Let $Freq(\mathcal{E}(D))$ be the most frequent label in the label counter $\mathcal{E}(D)$ for the dataset $D$. 
\\
& \\\hline

\end{tabular}
}
\end{table}

We define the notations used by the KNN algorithm in Table~\ref{tbl:notations}, following the ones used by Li et al.~\cite{li2022proving}.
%
%
Consider $T_x^3 = \{(x_1, l_a), (x_2, l_a), (x_3, l_b)\}$ as an example, which captures the 3-nearest neighbors of a test input $x$.  Then the corresponding label counter is $\mathcal{E}(T_x^3) = \{(l_a: 2), (l_b: 1)\}$, meaning that two elements in $T_x^3$ have the label $l_a$ and one element has  the label $l_b$. The corresponding most frequent label is $Freq(\mathcal{E}(T_x^3)) = l_a$.

%
For each subset $T'\in\Delta_n(T)$, we define a removal set $R=(T\setminus T')$ and a removal strategy $\mathcal{S}=\mathcal{E}(R)$.  

\begin{itemize}
\item 

A \emph{removal set} $R$ for a set $T$ is a non-empty subset $R \subset T$, to represent the removal of the elements in $R$ from $T$. 
\item

A \emph{removal strategy} $\mathcal{S}$ is the label counter of a removal set $R$, i.e.,  $\mathcal{S} = \mathcal{E}(R)$.  

%

\end{itemize}
Thus, all the removal sets form the \emph{concrete domain}, and all the removal strategies form an \emph{abstract domain}. While analysis in the (large) concrete domain is expensive, analysis in the (smaller) abstract domain is much cheaper.  This is analogous to the \emph{abstract interpretation}~\cite{CousotC77} paradigm for static program analysis\footnote{
\textcolor{black}{
There are Galois connections~\cite{CousotC14} $(\alpha,\gamma)$ between removal sets and removal strategies (multisets) that are standard in the context of abstract interpretation, where the $\alpha$ function abstracts removal sets in the concrete domain to removal strategies (multisets) in the abstract domain, and the $\gamma$ function concretizes the multisets back to sets.
}
}.

For the set $T_x^3$ above, there are 6 removal sets: 
\makebox{$R_1=\{(x_1, l_a)\}$}, \makebox{$R_2=\{(x_2, l_a)\}$}, \makebox{$R_3=\{(x_3, l_b)\}$}, \makebox{$R_4=\{(x_1, l_a), (x_2, l_a)\}$},  $R_5=\{(x_1$, $l_a)$, $(x_3, l_b)\}$, and \makebox{$R_6=\{(x_2, l_a), (x_3, l_c)\}$}. 
They correspond to 4 removal strategies:
\makebox{$\mathcal{S}_1=\{(l_a: 1)\}$}, \makebox{$\mathcal{S}_2=\{(l_c: 1)\}$}, $\mathcal{S}_3=\{(l_a:1)$, $(l_c:1)\}$, and \makebox{$\mathcal{S}_4= \{(l_a: 2)\}$}. 
As the number of elements in $T$ increases, the size gap between the concrete and abstract domains increases drastically--- this is the reason why our method is efficient.
%

\subsection{The \textsc{QuickCertify} Subroutine}

In this subroutine, we check a series of \emph{sufficient conditions} under which the prediction result for test input $x$ is guaranteed to be robust.  These sufficient conditions are designed to avoid the most expensive step of the KNN algorithm, which is the learning phase that relies on $p$-fold cross validations to compute the optimal $K$ parameter.

Since the optimal $K$ parameter is chosen from a set of candidate values, where $p$-fold cross validations are used to identify the value that minimizes prediction error, skipping the learning phase means we must directly analyze the behavior of the KNN prediction phase for all candidate $K$ values.  That is, assuming any of the candidate $K$ value may be the optimal one, we prove that the prediction result remains the same no matter which candidate $K$ value is used as the $K$ parameter.

Algorithm~\ref{alg:quick} shows the procedure, which takes the training set $T$, poisoning threshold $n$, and test input $x$ as input, and returns either \texttt{True} or \texttt{False} as output.  Here, \texttt{True} means the result is $n$-poisoning robust, and \texttt{False} means the result is unknown. 
For each candidate $K$ value,  $y=Freq(\mathcal{E}(T_x^{K}))$ is the most frequent label of the $K$-nearest neighbors of $x$.

\begin{algorithm}[t]
\caption{Subroutine $\textsc{QuickCertify} (T, n, x)$.}
\label{alg:quick}
{\footnotesize
$LabelSet \gets \{\}$\\
\For {each candidate $K$ value}  {
	Let $y = Freq(\mathcal{E}(T_x^{K}))$ and add $y$ into $LabelSet$;\\
		\If {$y \neq Freq(\mathcal{E}(T_x^{K + n}) \setminus \{(y : n)\})$} 
	{ \Return \texttt{False}}
		\If {$|LabelSet| > 1$}
	{ \Return \texttt{False}}
}
\Return \texttt{True}
}
\end{algorithm}

Recall that, in Section~\ref{sec:motivation}, we have explained the two ways in which poisoned data in $T$ may affect the prediction result.   The first one is called \emph{direct influence}: without changing the $K$ value, the poisoned data may affect the $K$-nearest neighbors of $x$ and thus their most frequent label. The second one is called \emph{indirect influence}: by changing the $K$ value, the poisoned data may affect how many neighbors to consider.
Inside the \textsc{QuickCertify} subroutine, we check for sufficient conditions under which none of the above two types of influence is possible.

The check for \emph{direct influence} is implemented in Line~4. Here, $T_x^{K+n}$ consists of the $(K+n)$ nearest neighbors of $x$, and $\mathcal{E}(T_x^{K+n})$ is the label counter.  Therefore, $\mathcal{E}(T_x^{K+n}) \setminus \{(y:n)\}$ means removing $n$ data elements labeled $y$.   $Freq(\mathcal{E}(T_x^{K+n}) \setminus \{(y:n)\})$ represents the most frequent label after the removal.
If it is possible for this removal strategy to change the most frequent label, then we conservatively assume that the prediction result \emph{may not} be robust.

The check for \emph{indirect influence} is implemented in Line~7.  Here, $LabelSet$ stores all of the most frequent labels for different candidate $K$ values.  If the most frequent labels for any two candidate $K$ values differ, i.e., $|LabelSet|>1$, we conservatively assume the prediction result \emph{may not} be robust.

On the other hand, if the prediction result remains the same during both checks, we can safely assume that the prediction result is $n$-poisoning robust.

\subsection{Two Examples}

We illustrate  Algorithm~\ref{alg:quick} using two examples.

\begin{figure}
 \centering
	\begin{subfigure}[b]{0.22\textwidth}
	 \centering
	\includegraphics[width=.9\textwidth]{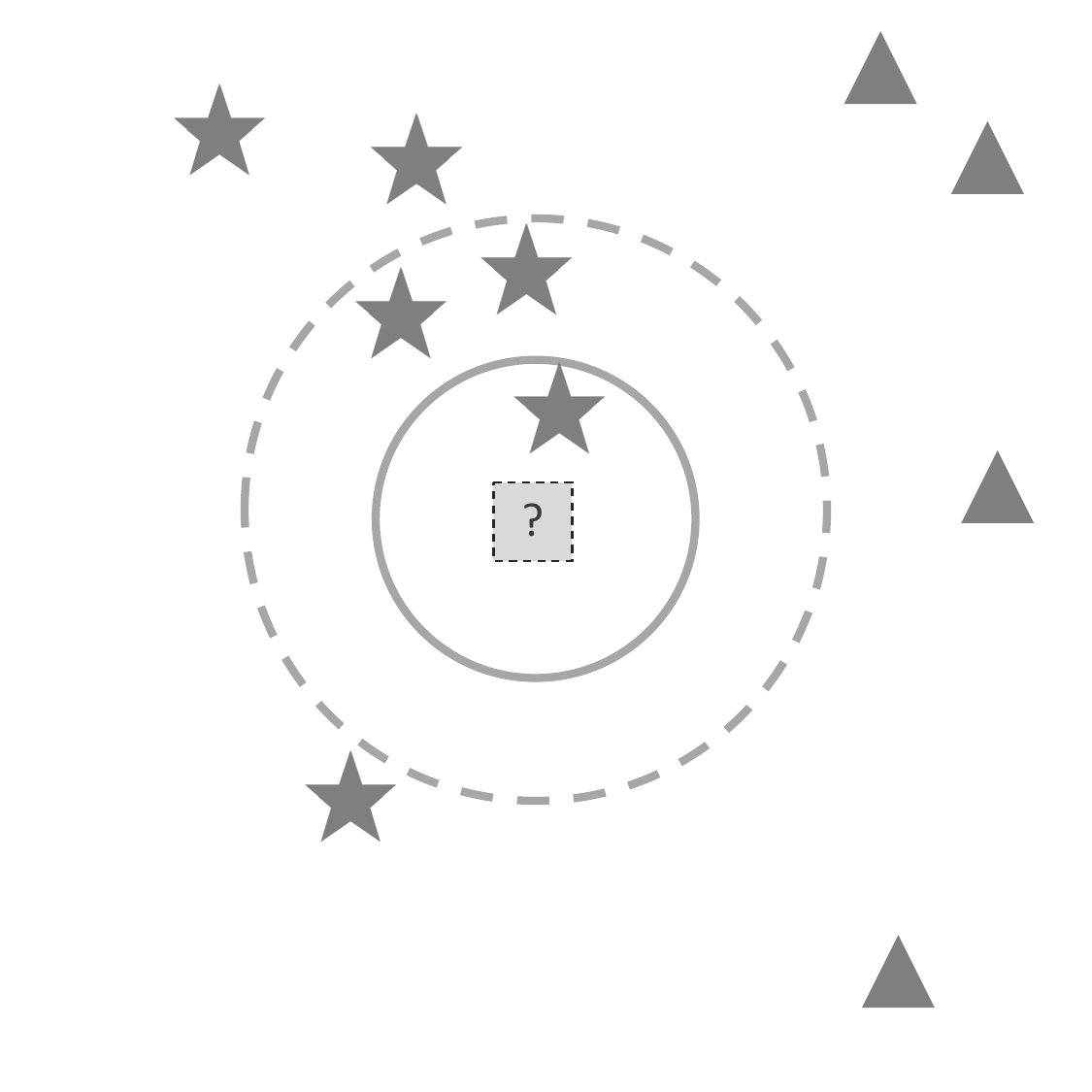}
	\caption{\scriptsize For $K=1$, $Freq(\mathcal{E}(T_x^{1})) = star$, and $Freq(\mathcal{E}(T_x^{1+n})\setminus\{star: n\}) = star$}
	\end{subfigure}
\hfill
	\begin{subfigure}[b]{0.22\textwidth}
	 \centering
	\includegraphics[width=.9\textwidth]{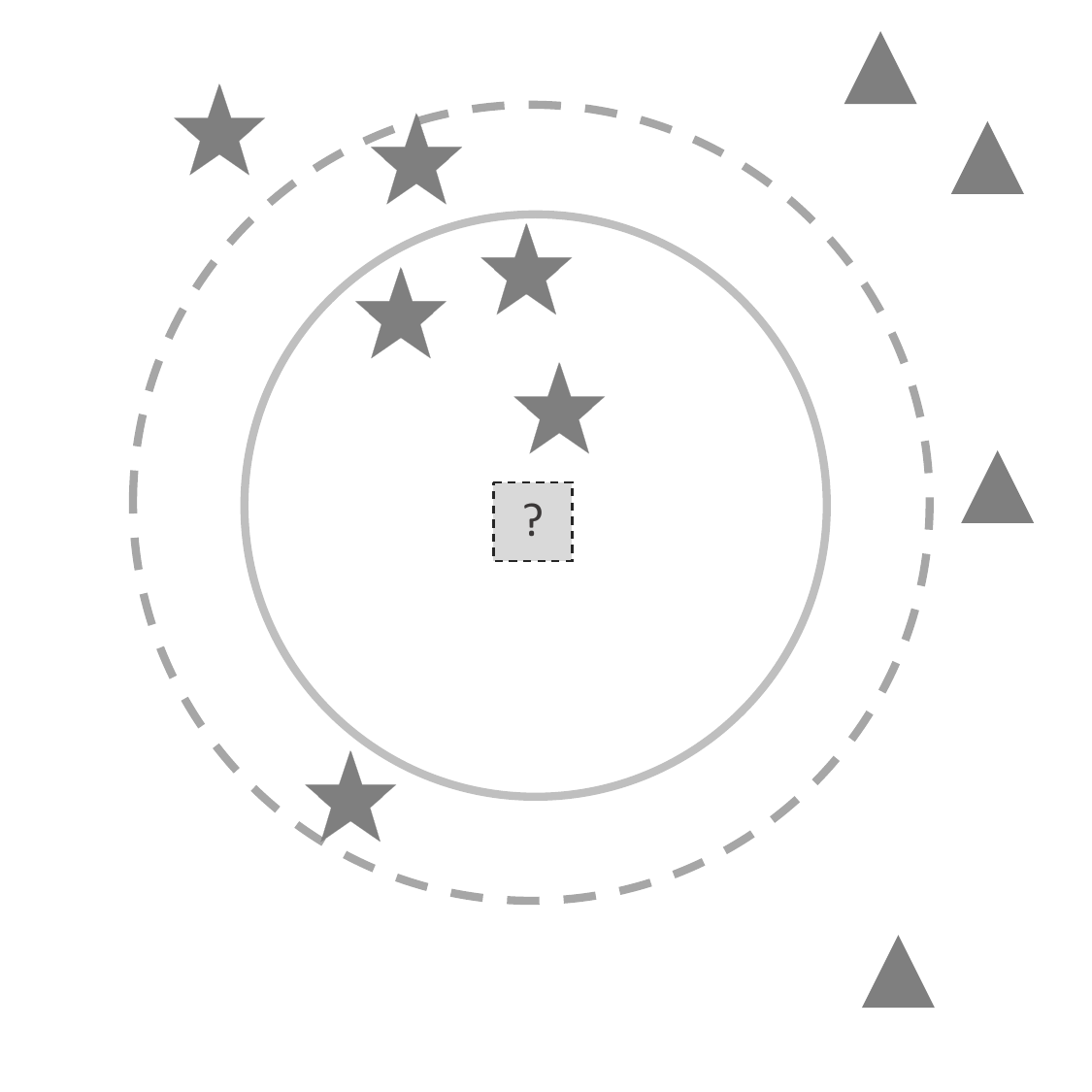}
	\caption{\scriptsize For $K=3$, $Freq(\mathcal{E}(T_x^{3})) = star$, and $Freq(\mathcal{E}(T_x^{3+n})\setminus\{star: n\}) = star$}
	\end{subfigure}
\caption{\emph{Robust} example for  \textsc{QuickCertify}, where the poisoning number is $n=2$, and candidate $K$ values are \{1, 3\}.}
\label{fig:case1}
\end{figure}

Figure \ref{fig:case1} shows an example where robustness can be proved by \textsc{QuickCertify}.   For simplicity, we assume the only two candidate values for the parameter $K$ are $K=1$ and $K=3$.  
When $K = 1$, as shown in Figure \ref{fig:case1} (a), \emph{star} is the most frequent label of the $x$'s neighbors, denoted $\mathcal{E}(T_x^1) = \{(star  : 1)\}$, and inside Algorithm~\ref{alg:quick}, we have $LabelSet = \{star \}$. 
The extreme case is represented by $\mathcal{E}(T_x^{1+2}) \setminus \{(star : 2)\} =  \{(star  : 1)\}$, which means $x$ is still classified as \emph{star} after applying this aggressive removal strategy.   

When $K = 3$, as shown in Figure \ref{fig:case1} (b), \emph{star} is also the most frequent label in $\mathcal{E}(T_x^3) = \{star : 3\}$ and thus $LabelSet =\{star\}$. 
The extreme case is represented by $\mathcal{E}(T_x^{3+2}) \setminus \{star : 2\} =  \{star : 3\}$, which means $x$ is still classified as \emph{star} after applying this removal strategy.
In this example $n=2$, thus $x$ is proved to be robust against 2-poisoning attacks.

\begin{figure}
 \centering
	\begin{subfigure}[b]{0.23\textwidth}
	 \centering
	\includegraphics[width=.9\textwidth]{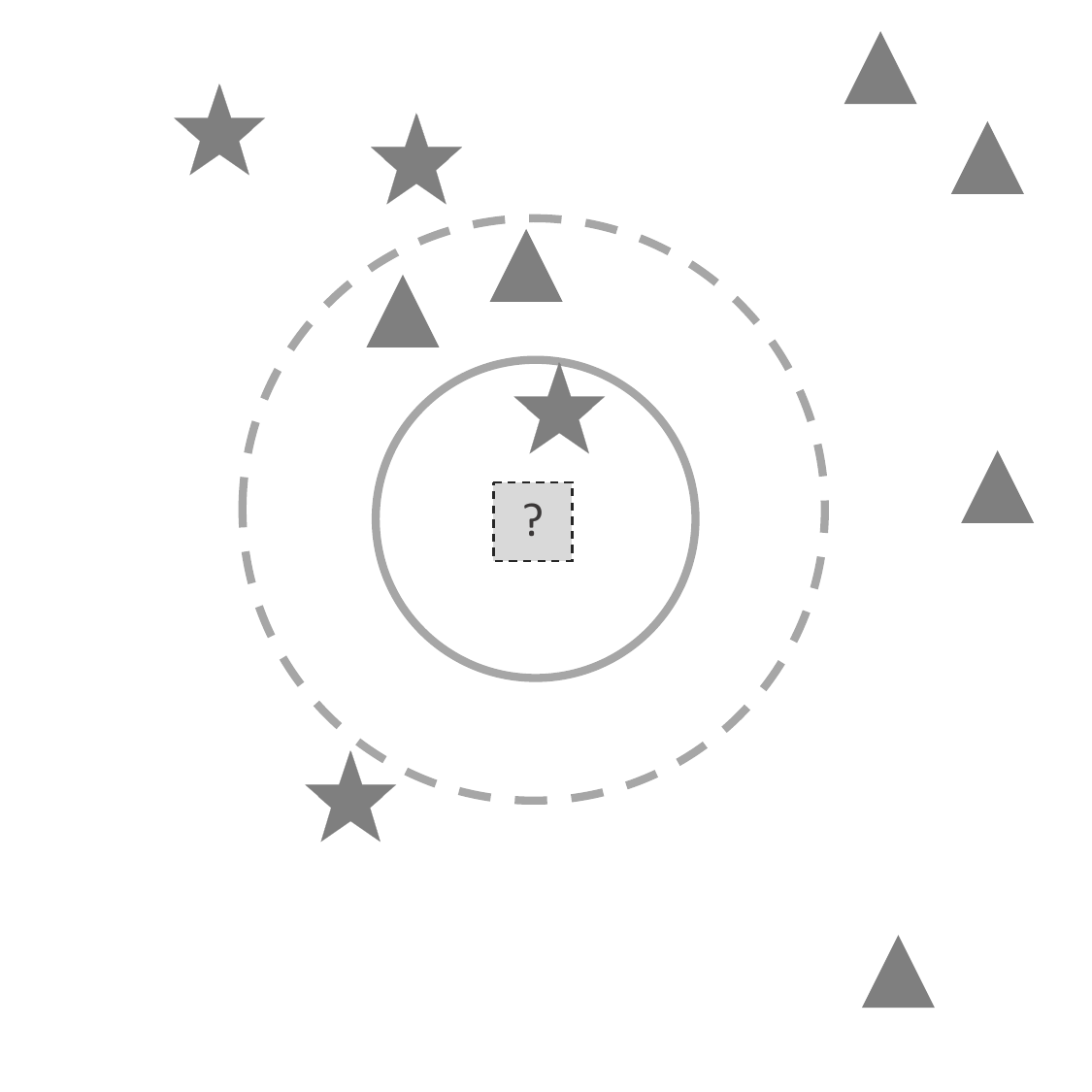}
	\caption{\scriptsize For $K=1$, $Freq(\mathcal{E}(T_x^{1})) = star$, and $Freq(\mathcal{E}(T_x^{1+n})\setminus\{star: n\}) = triangle$}
	\end{subfigure}
\hfill
	\begin{subfigure}[b]{0.24\textwidth}
	 \centering
	\includegraphics[width=.87\textwidth]{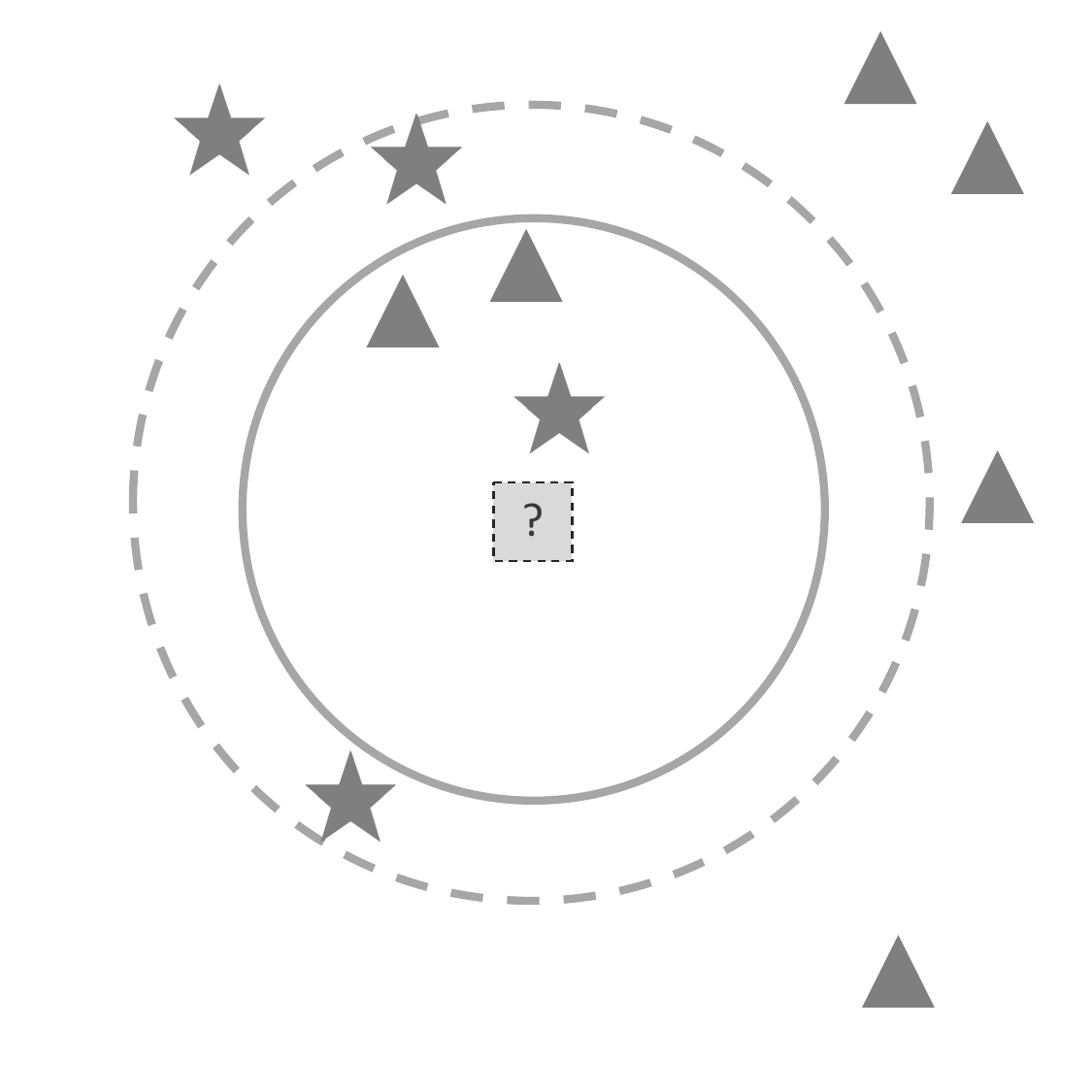}
	\caption{\scriptsize For $K=3$, $Freq(\mathcal{E}(T_x^{3})) = triangle$, and $Freq(\mathcal{E}(T_x^{3+n})\setminus\{triangle: n\}) = star$}
	\end{subfigure}
\caption{\emph{Unknown} example for \textsc{QuickCertify}, where the poisoning number is $n=2$ and the only two candidate values are $K$=1 and $K$=3.}
\label{fig:case2}
\end{figure}

Figure \ref{fig:case2} shows an example where the robustness cannot be proved by \textsc{QuickCertify}.  
%
%
When $K = 1$, as shown in Figure \ref{fig:case2} (a), \emph{star} is the most frequent label in $\mathcal{E}(T_x^1) = \{(star : 1)\}$ and $LabelSet = \{star\}$. 
The extreme case is $\mathcal{E}(T_x^{1+2}) \setminus \{(star: 2)\} =  \{triangle : 2\}$, which means $x$ is classified  as \emph{triangle}.  Thus, \textsc{QuickCertify} returns \texttt{False} in Line~5.

\subsection{Correctness and Efficiency}

The following theorem states that our method is sound in proving $n$-poisoning robustness. 

\begin{theorem}
\label{thm:quick}
If \textsc{QuickCertify}$(T,n,x)$ returns \texttt{True}, the KNN's prediction result for $x$ is guaranteed to be $n$-poisoning robust. 
\end{theorem}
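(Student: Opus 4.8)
The plan is to show that once \textsc{QuickCertify} returns \texttt{True}, every clean subset $T'\in\Delta_n(T)$ produces the same prediction for $x$ as $T$ does. Two ingredients drive the argument: a \emph{geometric} fact, that deleting at most $n$ elements from $T$ can only push the $K$-nearest neighbors of $x$ out as far as the $(K{+}n)$-nearest neighbors of $x$ in $T$; and a \emph{combinatorial} fact, that among all ways of deleting $n$ elements from a label counter, deleting $n$ copies of the currently winning label is the worst case for that label remaining the most frequent one. I will also rely on the standing assumptions, implicit in the setup, that \textsc{QuickCertify} ranges over the same fixed candidate set of $K$ values used by \textsc{KNN\_learn} (the range is independent of the training set), that distances to $x$ are broken by a fixed total order so that each $T_x^{K}$ is well defined, and that $n$ is small relative to $|T|$ so that $|T'|$ exceeds every candidate $K$ for all $T'\in\Delta_n(T)$.

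First I would unpack ``returns \texttt{True}''. The loop never triggered Line~5 or Line~7, so: (i) for every candidate $K$, writing $y_K = Freq(\mathcal{E}(T_x^{K}))$, we have $y_K = Freq(\mathcal{E}(T_x^{K+n})\setminus\{(y_K:n)\})$; and (ii) $|LabelSet|=1$, i.e.\ all the $y_K$ coincide; call the common label $y^{\star}$. Since the optimal parameter computed by \textsc{KNN\_learn}$(T)$ is one of these candidate values, $y = M(x) = y^{\star}$. Likewise, for any $T'\in\Delta_n(T)$ the optimal parameter $K'$ computed by \textsc{KNN\_learn}$(T')$ belongs to the same candidate set, and $y' = M'(x) = Freq(\mathcal{E}((T')_x^{K'}))$. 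Hence it suffices to prove that $Freq(\mathcal{E}((T')_x^{K'})) = y^{\star}$ for every $T'\in\Delta_n(T)$ with its associated $K'$.

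Next I would establish the two ingredients and conclude. Geometric fact: order the elements of $T$ by distance to $x$ under the fixed tie-break; if $T'$ deletes $i\le n$ of them, the $K'$-th surviving element occupies position at most $K'+i\le K'+n$, so $(T')_x^{K'}\subseteq T_x^{K'+n}$, and comparing sizes gives $|T_x^{K'+n}\setminus (T')_x^{K'}| = n$. Therefore $\mathcal{E}((T')_x^{K'})$ is obtained from $\mathcal{E}(T_x^{K'+n})$ by deleting some size-$n$ sub-multiset $R$ of labels. Combinatorial fact: let $c_l$ be the count of label $l$ in $\mathcal{E}(T_x^{K'+n})$. Passing the Line~4 check at $K=K'$ forces $c_{y^{\star}}\ge n$, since otherwise $\mathcal{E}(T_x^{K'+n})\setminus\{(y^{\star}:n)\}$ would be a nonempty counter with no $y^{\star}$ and its $Freq$ could not equal $y^{\star}$. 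Now compare $\mathcal{E}(T_x^{K'+n})\setminus R$ with $\mathcal{E}(T_x^{K'+n})\setminus\{(y^{\star}:n)\}$: in the former, $y^{\star}$ has count $c_{y^{\star}} - r_{y^{\star}} \ge c_{y^{\star}} - n$ and every other label $l$ has count $c_l - r_l \le c_l$, so the margin $\mathrm{count}(y^{\star}) - \mathrm{count}(l)$ is at least as large as in the latter for every $l$. Since $Freq$ is the count-argmax with a fixed lexicographic tie-break, a label winning in $\mathcal{E}(T_x^{K'+n})\setminus\{(y^{\star}:n)\}$ still wins in $\mathcal{E}(T_x^{K'+n})\setminus R$. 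The Line~4 check says $y^{\star}$ wins the former, so $Freq(\mathcal{E}((T')_x^{K'})) = y^{\star} = y$, which is exactly $n$-poisoning robustness.

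I expect the main obstacle to be making the combinatorial ``worst case'' step airtight in the presence of ties: stating the domination precisely for the count comparison \emph{and} the lexicographic tie-break simultaneously, and verifying that the degenerate case $c_{y^{\star}}<n$ genuinely cannot survive Line~4 (so it never arises on the \texttt{True} branch). The geometric monotonicity step is routine, but it quietly needs the bookkeeping that $T_x^{K+n}$ is well defined and that $|T'|\ge K'$ for all candidate $K'$; I would fold these into the standing assumptions rather than dwell on them.
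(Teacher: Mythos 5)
Your proposal is correct and follows essentially the same route as the paper, which omits the full proof but sketches exactly your two ingredients: that any $K'$-neighborhood of $x$ in a clean subset $T'$ corresponds to removing $i\le n$ elements from $T_x^{K'+n}$, and that removing $n$ copies of the winning label is the worst case, with the $|LabelSet|$ check handling the indirect ($K$-changing) influence. Your margin/tie-break bookkeeping and the observation that $c_{y^{\star}}\ge n$ is forced by the Line~4 check are the details the paper leaves implicit.
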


Due to space limit, we omit the full proof. 
Instead, we explain the intuition behind Line~4 of the algorithm. 
First, we note that the prediction label $Freq(\mathcal{E}(T_x^{'K}))$ from any  \makebox{$T' \subset \Delta_n(T)$} can correspond to a $Freq(\mathcal{E}(D))$ where D is obtained by removing \makebox{$i~(\leq n)$} elements from $T_x^{K + n}$. Thus, we only need to pay attention to the $(K + n)$ nearest neighbors of $x$; other elements which are further away from $x$ can be safely ignored (cf.~\cite{li2022proving,jia2020certified}).
Next, to maximize the chance of changing the most frequent label from $y$ to another label, we want to remove as many $y$-labeled elements  as possible from $x$'s neighbors.  
Thus, the most aggressive removal case is captured by \makebox{$\mathcal{E}(T_x^{K + n}) \setminus \{(y : n)\}$}. If the most frequent label remains unchanged even in this case, it is guaranteed unchanged.

Next, we explain why \textsc{QuickCertify} is fast.  There are three reasons.  First, it completely avoids the computationally expensive $p$-fold cross validations.  Second, it considers only the $K+n$ nearest neighbors of $x$.  Third, it focuses on analyzing the label counts, which are in the (small) abstract domain, as opposed to the removal sets, which are in the (large) concrete domain.

For these reasons, the execution time of this subroutine is often negligible (e.g., less than 1 second) even for large datasets.  At the same time,  our experimental evaluation shows that it can prove robustness for a surprisingly large number of test inputs.

\textcolor{black}{
To summarize, mapping a potentially large set of concrete sets to their corresponding label multiset (label counts) is an over-approximated abstraction, since the prediction result for a test input $x$ is determined by the label counts of $x$'s nearest neighbors. This over-approximated abstraction allows \textsc{QuickCertify} to efficiently analyze the impact of the maximal allowable change in the label counts.
}

\section{Reducing the Search Space}
\label{sec:constraint}

In this section, we present the subroutine \textsc{GenPromisingSubsets}, which narrows down the search space by removing \emph{obviously non-violating} subsets from $\Delta_n(T)$ and returns the remaining ones, denoted by the set $\nabla_n^x(T)$ in Algorithm~\ref{alg:our_method}.

\subsection{Minimal Violating Removal in Neighbors}

We filter the \emph{obviously non-violating} subsets by computing some common property for each candidate $K$ value such that it must be part of every \emph{violating} removal set. 

We observe that any violating removal set for a specific candidate $K$ value must ensure that, for test input $x$, its new $K$ nearest neighbors after removal have a most frequent label $y'$ that is different from the default label $y$. Our method computes the minimal number of removed elements in $x's$ neighborhood to achieve this, let us call it minimal violating removal, denote $min\_rmv$. With this number, we know the every \emph{violating} removal set must have at least $min\_rmv$ elements from $x$'s neighbors $T_x^{K+n}$.

The test input $x$’s new nearest neighbors after removal is represented as $T_x^{K+i} \setminus \{i$ elements from $T_x^{K+i}\}$, where $i =1,2,...n$. To compute the minimal violating removal, rather than checking each possible value of $i$ from $1$ to $n$, we need a more efficient method, e.g., binary search with $O(log~{n})$. To use binary search, we need to prove the monotonicity of violating removals, defined below.

\begin{theorem}[Monotonicity]
\label{the:mono}
If there is some $i$ allowing $T_x^{K+i} \setminus \{i$ elements from $T_x^{K+i}\}$ to have a different most-frequent label $y’$, then any larger value $j > i$ will also allow $T_x^{K+j} \setminus \{j$ elements from $T_x^{K+j}\}$ to have a different most-frequent label $y’$. Conversely, if $i$ does not allow it, then any smaller value $j < i$ does not allow it either.
\end{theorem}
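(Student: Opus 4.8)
The plan is to recast the statement as an \emph{upward-closure} property of a predicate indexed by the removal count, and then read the result off the nested structure of nearest-neighbor sets. I would fix the candidate parameter $K$ and a target label $y' \neq y$, and for $1 \le i \le n$ define the predicate $P(i)$: there exists $R \subseteq T_x^{K+i}$ with $|R| = i$ such that $Freq(\mathcal{E}(T_x^{K+i}\setminus R)) = y'$. Since $|T_x^{K+i}| = K+i$, any witness $T_x^{K+i}\setminus R$ has exactly $K$ elements, so $P(i)$ is precisely the clause ``$i$ allows $T_x^{K+i}$ minus $i$ elements to have most-frequent label $y'$''. It then suffices to prove the single step $P(i)\Rightarrow P(i+1)$: the first sentence of the theorem follows by induction on $j-i$, and the ``conversely'' sentence is immediate from iterating the step ($P(j)\Rightarrow\cdots\Rightarrow P(i)$ for $j<i$, whose contrapositive is exactly that clause). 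If one instead reads $y'$ as allowed to vary with $i$, the argument is run separately for each of the finitely many labels $y'\neq y$ and the conclusions are unioned.

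Before the step itself I would record one preliminary observation that licenses restricting $R$ to lie inside $T_x^{K+i}$: for any size-$i$ removal set taken anywhere in $T$, the resulting $K$-nearest neighbors of $x$ coincide with $T_x^{K+i}\setminus R''$ for some $R''\subseteq T_x^{K+i}$ with $|R''|=i$, since deletions outside the neighborhood can be mimicked by instead deleting the farthest still-surviving neighbors. The step then uses only that, under the standard assumption that the distance order to $x$ is total (ties broken deterministically, as elsewhere in the paper), $T_x^{K+i}\subseteq T_x^{K+i+1}$ with $T_x^{K+i+1}\setminus T_x^{K+i}=\{e\}$, where $e$ is the $(K+i+1)$-th nearest element of $x$ in $T$. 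Given a witness $R\subseteq T_x^{K+i}$, $|R|=i$, of $P(i)$, set $S:=T_x^{K+i}\setminus R$ with $Freq(\mathcal{E}(S))=y'$, and let $R':=R\cup\{e\}$. Then $R'\subseteq T_x^{K+i+1}$, $|R'|=i+1$ since $e\notin R$, and
\[
T_x^{K+i+1}\setminus R' = (T_x^{K+i}\cup\{e\})\setminus(R\cup\{e\}) = T_x^{K+i}\setminus R = S,
\]
so $Freq(\mathcal{E}(T_x^{K+i+1}\setminus R')) = Freq(\mathcal{E}(S)) = y'$, which is exactly $P(i+1)$.

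I do not expect a genuine obstacle: the content lies entirely in choosing the right abstraction --- tracking the surviving $K$-element set rather than the removal set --- after which ``feasible at level $i$ implies feasible at level $i+1$'' is witnessed by simply also discarding the newly admitted neighbor $e$. The only points that need a sentence of care are the preliminary observation above (so that ``$i$ elements from $T_x^{K+i}$'' genuinely captures every post-removal neighbor set obtainable by deleting $i$ elements of $T$) and the handling of distance ties in the definition of $T_x^{K+m}$ so that the $T_x^{K+i}$ form an increasing chain with singleton increments; the tie-breaking rule used inside $Freq$ needs no discussion, since the two sides of the displayed identity are literally the same set with the same label counter.
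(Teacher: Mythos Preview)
Your proposal is correct and follows essentially the same approach as the paper: both arguments augment the removal set by the newly admitted neighbor(s) so that the surviving $K$-element set is literally unchanged, i.e., $T_x^{K+j}\setminus S' = T_x^{K+i}\setminus S$. The only cosmetic difference is that the paper jumps directly from $i$ to an arbitrary $j>i$ by setting $S' = S \cup (T_x^{K+j}\setminus T_x^{K+i})$, whereas you do the single step $i\to i+1$ and induct; your preliminary observation about removals outside $T_x^{K+i}$ is extra context not required by the theorem as stated.
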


\begin{proof}
If there is some $i$ allowing $T_x^{K+i} \setminus \{i$ elements from $T_x^{K+i}\}$ to have a different most-frequent label $y’$, there exists $S \subset T_x^{K+i}$ such that $|S| = i$ and $Freq(T_x^{K+i}\setminus S) = y'$. For any $j > i$ and $T_x^{K+j}$, we can always construct $S' = S \cup (T_x^{K+j}\setminus T_x^{K+i})$, which satisfies $S' \subset T_x^{K+j}, |S'| = j$ and $Freq(T_x^{K+j}\setminus S') = y'$. 
The reverse can be proved similarly. 
%
\end{proof}

Lines~ 2-11 in Algorithm \ref{alg:abs-analyze}  show the process of finding the minimal violating removal using binary search. Assume the possible range is $0\sim n+1$ (line~2), the binary search divides the range in half (line~4) and checks the middle value (line~5). To check whether a removal $mid$ can result in a different label $y'\neq y$, the most possible operation is to remove $mid$ elements with $y$ label. It $mid$ works, according to Theorem \ref{the:mono}, we know the minimal removal is in the range $start\sim mid$ (line~6); otherwise it is in the range $mid+1\sim end$ (line~8). The binary search stops when $start$ equals $end$, and this will the minimal violating removal.

\begin{algorithm}[t]
\caption{$\textsc{GenPromisingSubsets}(T, n, x, y)$.}
\label{alg:abs-analyze}
{\footnotesize
\For {each candidate $K$ value} {
	$start = 0; end = n + 1;$\\ 
	\While {start < end} {
 		$mid = (start + end) / 2;$\\
		\eIf {$y \neq Freq(\mathcal{E}(T_x^{K+m}) \setminus \{(y : m)\})$} {
			$end = mid;$
		}{ $start = mid + 1;$ }
	}
	$min\_rmv = start;$\\
  	\If {$min\_rmv \leq n$} {
           \For{each $R_1\subseteq T_x^{K+n}$ s.t. $|R_1| \geq min\_rmv$}{
    	      \For {each $R_2 \subseteq (T \setminus T_x^{K+n})$ and $|R_2| \leq n - |R_1|$} {
		  $R = R_1 \cup R_2$;\\
		  Add $(T\setminus R)$ to $\nabla_n^x(T);$
	      }
           }
        }
}
}
\end{algorithm}

Since $n$ is the maximal allowed removal, when $min\_rmv > n$, it is impossible for the most frequent label to change from $y$ to $y'$.

\subsection{An Illustrative Example}

Here we give an example of the binary search in Algorithm \ref{alg:abs-analyze}. Assume in the original training set $T$, for the test input $x$, the optimal $K$ is $K = 1$ and the default label is $y = star$. 

\begin{example}
Assume $n = 5$, $T_x^3=\{star *2, triangle * 1\}$, $T_x^4=\{star *2, triangle * 2\}$, and $T_x^5=\{star *3, triangle * 2\}$. For the candidate $K = 2$, we show how to compute the minimal violating removal in $x's$ neighbors. 

At first, $start=0$ and $end=6$, which means the possible value range of minimal removal is $0\sim6$. Our method first checks $mid = 3$, since $T_x^{2+3} \setminus \{(star : 3)\}$ results in the most-frequent label $triangle$, our method can cut the possible range by half to $0\sim3$. Next, we check $mid=1$, and reduce the range to $0\sim1$. Finally, we check $mid = 0$, which does not work, so the range becomes $1\sim1$, and we return 1 as the minimal violating removal in  $x$'s neighbors.
\end{example}

Since binary search reduces the range by half at each step, it is efficient.  For example, when $n$=180 for MNIST, binary search needs only 8 checks to compute the result,  whereas going through each value in the range requires 180 checks.  In other words, the speedup is more than 20X.

\subsection{The Reduced Search Space}

Based on the minimal violating removal, $min\_rmv$, we compute the reduced set $\nabla_n^x(T)$ as shown in Lines~12-20 of Algorithm~\ref{alg:abs-analyze}.

Here, each removal set $R$ is the union of two sets, $R_1$ and $R_2$, where $R_1$ is a removal set that contains at least $min\_rmv$ elements from $x$'s neighborhood  $T_x^{K+n}$, and $R_2\subseteq (T\setminus T_x^{K+n})$ is a subset of the left-over data elements.

Our experiments show that, in practice, the reduced set $\nabla_n^x(T)$ is often significantly smaller than the original set $\Delta_n(T)$. 
A special case is when $min\_rmv = 0$, for which $\nabla_n^x(T)$ is the same as $\Delta_n(T)$, meaning the search space is not reduced.  
However, this special case is rare and, during our experimental evaluation, it never occurred.

\section{Incremental Computation}
 \label{sec:difflearn}

In this section, we present our method for speeding up an expensive step of the KNN algorithm, the $p$-fold cross validations inside \textsc{KNN\_learn}.
We achieve this speedup by splitting \textsc{KNN\_learn} into two subroutines:  \textsc{KNN\_learn\_init}, which is applied only once to the original training set $T$, and \textsc{KNN\_learn\_update}, which is applied to each individual removal set $R=(T\setminus T')$, where $T'\in\nabla_n^x(T)$.

\subsection{The Intuition}

First, we explain why the standard \textsc{KNN\_learn} is computationally expensive. 
This is because, for each candidate value of parameter $K$, denoted $K_i$,  the standard $p$-fold cross validation~\cite{mclachlan2005analyzing} must be used to compute the classification error.   
Algorithm~\ref{alg:knn-learn2} (excluding Lines~15-16) shows the computation.

First, the training set $T$ is partitioned into $p$ groups, denoted $\{G_1, G_2, ..., G_p\}$.  Then, the set of misclassification samples in each group $G_j$ is computed, denoted  $errSet_{G_j}^{K_i}$.  Next, the error is averaged over all groups, which results in $error^{K_i}$.  Finally, the $K_i$ value with the smallest classification error is chosen as the optimal $K$ value.

The computation is expensive because $error_{G_j}^{K_i}$, for each $K_i$, requires exactly $|G_j|$ calls to the standard \textsc{KNN\_predict}$(T\setminus G_j, K_i, x)$, one per data element $x\in G_j$, while treating the set $D=(T\setminus G_j)$ as the training set.

Our intuition for speeding up this computation is as follows.
Given the original training set $T$, and a subset $T'\in\nabla_n^x(T)$, the corresponding removal set $R=(T\setminus T')$ can capture
the difference between these two sets, and thus capture the difference of their $error^{K_i}$.
Since $K_i$ is fixed when computing $error^{K_i}$, we only need to consider the \emph{direct influence} (i.e., neighbors change) brought by removal set $R$.
In practice, the removal set is often small, which means the vast majority of data elements in the $p$-fold partition of $T'$, denoted $\{G'_1,\ldots,G'_p\}$, are the same as data elements in the $p$-fold partition of $T$, denoted $\{G_1,\dots,G_p\}$.
Thus, for most elements, their neighbors are almost the same.
Instead of computing the error sets ($errSet_{G'_j}^{K_i}$) from scratch for every single $G'_j$, we can use the error sets ($errSet_{G_j}^{K_i}$) for $G_j$ as the starting point, and only compute the change brought by removal set $R$, leveraging the intermediate computation results stored in $Error$.


\subsection{The Algorithm}

Our incremental computation has two steps.  As shown in Algorithm~\ref{alg:our_method}, we apply \textsc{KNN\_learn\_init} once to the set $T$, and then apply \textsc{KNN\_learn\_update} to each removal set $R=(T\setminus T')$.

Our new subroutine \textsc{KNN\_learn\_init} is shown in Algorithm~\ref{alg:knn-learn2}. It differs from the standard \textsc{KNN\_learn} only in Lines~15-16, where it stores the intermediate computation results in \emph{Error}.  The first component in \emph{Error} is the set of $p$ groups in $T$.  The second component contains, for each $K_i$,  the misclassified elements in $G_j$.

\begin{algorithm}[t]
\caption{Subroutine $\textsc{KNN\_learn\_init}(T)$.}
\label{alg:knn-learn2}
{\footnotesize
Partition the training set $T$ into $p$ groups $\{G_1, G_2, ..., G_p\}$\\
\For {each candidate $K_i$ value} {
   \For {each group $G_j$} {
      $errSet_{G_j}^{K_i} \gets \{\}$\\
      \For {each data element $(x,y) \in G_j$}{
         \If {$\textsc{KNN\_predict}(T \setminus G_{j}, K_i, x) \neq y$}	
              {Add $(x, y)$ to $errSet_{G_j}^{K_i}$;}     
      }
      $error_{G_j}^{K_i} = \left| errSet_{G_j}^{K_i} \right| ~/~ \left| G_j \right|$\\
   }
   $error^{K_i} = \frac{1}{p}\sum_{j = 1}^{p} error_{G_j}^{K_i}$\\
}
$K \gets \underset{K_i}{\textbf{argmin}}~~   error^{K_i}$\\
\textcolor{darkblue}{
$Error \gets \langle \{G_1, G_2, ..., G_p\}, \{(errSet_{G_1}^{K_i}, \dots,  errSet_{G_p}^{K_i} )\} \rangle$
}\\

\Return $\langle K, Error \rangle$
}
\end{algorithm}

\begin{algorithm}[t]
\caption{$\textsc{KNN\_learn\_update}(R,  Error)$.}
\label{alg:diff-learn}
{\footnotesize

Let $\{G_1, \ldots, G_p\}$ and $\{(errSet_{G_1}^{K_i}, \dots,  errSet_{G_p}^{K_i} )\}$ be groups and error sets stored in $Error$\\
Compute the new groups $\{G'_j \mid G'_j = G_j \setminus R \mbox{ where } j = 1, \ldots, p\}$\\
Compute the new training set  $T' = \bigcup_{j\in\{1,\ldots,p\}} ~ G'_j$\\
Compute the influenced set, $influSet$, using $R$ and $\{G_j\}$\\

\For {each candidate $K_i$ value}{
   	\For {each new group $G'_j$} {
\textcolor{darkblue}{
           $newSet^+ = newSet^- = \{\}$}\\
           \For {each data element $(x,y) \in (G'_j \textcolor{darkblue}{ \cap influSet} )$ }{  
              \If {$\textsc{KNN\_predict}(T \setminus G_{j}, K_i, x) = y$ and $\textsc{KNN\_predict}(T' \setminus G'_{j}, K_i, x) \neq y$}	
                 {Add $(x, y)$ to $newSet^+$;} 
              \If {$\textsc{KNN\_predict}(T \setminus G_{j}, K_i, x) \neq y$ and $\textsc{KNN\_predict}(T' \setminus G'_{j}, K_i, x) = y$}
                 {Add $(x, y)$ to $newSet^-$;} 
           }           
\textcolor{darkblue}{
      	   $errSet_{G'_j}^{K_i} = errSet_{G_j}^{K_i} \setminus R \setminus newSet^- \cup newSet^+$ 
}\\

  	   $error_{G'_j}^{K_i} = \left| errSet_{G'_j}^{K_i} \right| ~/~ \left| G'_j \right|$\\
	}
   $error^{K_i} = \frac{1}{p}\sum_{j = 1}^{p} error_{G'_j}^{K_i}$\\
}
$K \gets \underset{K_i}{\textbf{argmin}}~~   error^{K_i}$\\
\Return $K$
}
\end{algorithm}

Subroutine \textsc{KNN\_learn\_update} is shown in Algorithm \ref{alg:diff-learn}, which computes the new $errSet_{G'_j}^{K^i}$ based on the $errSet_{G_j}^{K_i}$ stored in $Error$.
First, it computes the new groups $G'_j$ by removing elements in $R$ from the old groups $G_j$. 
Then, it computes $influSet$, which is defined in the next paragraph. 
Finally, it modifies the old $errSet_{G_j}^{K_i}$ (in Line~16) based on three cases:  it removes the set $R$ (Case 1) and the set $newSet^-$ (Case 2), and adds the set $newSet^+$ (Case 3). 
Below are the detailed explanations of these three cases:
\begin{enumerate}
\item If $(x, y) \in G_j \setminus G'_j$ was misclassified by $(T\setminus G_j)$, but this element is no longer in $T'$, it should be removed. 
\item If $(x, y) \in G_j \cap G'_j$ was misclassified by $(T\setminus G_j)$, but this element is correctly classified by $T'\setminus G'_j$, it should be removed. 
\item If $(x, y) \in G_j \cap G'_j$ was correctly classified by $(T\setminus G_j)$, but is misclassified by $T'\setminus G'_j$, it should be added. 
\end{enumerate}
%
Case (1) can be regarded as an \emph{explicit change} brought by the removal set $R$, whereas Case (2) and Case (3) are \emph{implied change}s brought by $R$:  these changes are implied because, while the element $(x, y)$ is not inside $R$, it is classified differently after the elements in $R$ are removed from $T$.

Since the removal set is small, most data elements in $G_j$ will not be part of the explicit or implied changes.  To avoid redundantly invoking \textsc{KNN\_predict} on these data elements, we filter them out using the influenced set (Line~8).  Here, assume that $K_{max}=max(\{K_i\})$ is the maximal candidate value, and during cross-validation, when $G_j$ is treated as the test set, $D=(T\setminus G_j)$ is the corresponding training set. 

{\footnotesize
\[\begin{array}{ll}
  influSet = \{~ (x, y) \in G_j \mid 
                            &
                            (x, y) \not\in R,\\
                            & 
                            D_x^{K_{max}} \cap R \neq \emptyset, 
                            \mbox{ and } \\
                            &\textsc{QuickCertify}(D,n,x) = \texttt{False}
              \}
\end{array}\]}%

\vspace{1ex}
\noindent
In other words, every element $(x, y)$ inside $influSet$ must satisfy three conditions: 
(1) the element is not in $R$; 
(2) at least one of its neighbors in $D_x^{K_{max}}$ is in $R$; and
(3) the element may be misclassified when at most $n$ neighbors are removed. 
Recall that the subroutine used in the last condition has been explained in Algorithm~\ref{alg:quick}.

\begin{table*}[h]
\centering
\caption{Comparing the accuracy of our method with the baseline (ground truth)and two existing methods (which cannot falsify) on the smaller datasets, for which the ground truth can be obtained by the baseline enumerative method (Algorithm~\ref{alg:baseline}).}
\label{tbl:small}
\scalebox{0.7}{
\begin{tabular}{|l|c|c|c|c|r|c|c|c|r|c|c|c|r|c|c|c|r|}
  \hline 
\multicolumn{2}{|c|}{Benchmark}  & \multicolumn{4}{c|}{Baseline} 
                                  & \multicolumn{4}{c|}{Jia et al.~\cite{jia2020certified}}
                                  & \multicolumn{4}{c|}{Li et al.~\cite{li2022proving}}   
                                  & \multicolumn{4}{c|}{Our Method}  
\\\cline{1-18}
%
   dataset  &  test data     & certified  & falsified  & \textbf{unknown}   & time
                            & certified  & falsified  & \textbf{unknown}   & time
                            & certified  & falsified  & \textbf{unknown}   & time
                            & certified  & falsified  & \textbf{unknown}   & time

\\
            &  \#         & \#  & \#  & \#   & (s)
                            & \#  & \#  & \#   & (s)
                            & \#  & \#  & \#   & (s)
                            & \#  & \#  & \#   & (s)
\\\hline\hline

Iris ($n$=1)     &  15    &    15  & 0  &   \textbf{0}   &   49    
                          &    0   & 0  &   \textbf{15}  &    1    
                           &    14   & 0  &   \textbf{1}  &    1    
                          &    15  & 0  &   \textbf{0}   &    1        
\\\hline

Iris ($n$=2)     &  15    &    14  &  1 &   \textbf{0}   &   3,086     
                          &    0   & 0  &   \textbf{15}  &    1  
                          &    13   & 0  &   \textbf{2}  &    1      
                          &    14  & 1  &   \textbf{0}   &    5        
 \\\hline

Iris ($n$=3)     &  15    &    0  & 1  &   \textbf{14}   &   6,721     
                          &    0   & 0  &  \textbf{15}  &    1 
                          &    11   & 0  &  \textbf{4}  &    1       
                          &    13  & 1  &  \textbf{1}   &    120       
\\\hline

Digits ($n$=1)   & 180    &    0   & 1 &   \textbf{179} &  7,168   %
                          &    170   & 0  &  \textbf{10}   &    1        %
                          &    172   & 0  &  \textbf{8}  &    1    
                          &    179 & 1  &  \textbf{0}   &    3    %
\\\hline
          \end{tabular}
}
\end{table*}

\section{Experiments}
\label{sec:expr}

We have implemented our method using Python and the popular machine learning toolkit \texttt{scikit-learn 0.24.2}, together with the baseline method in Algorithm~\ref{alg:baseline}, and the two existing methods of Jia et al.~\cite{jia2020certified} and Li et al.~\cite{li2022proving}.  
%
For experimental comparison, we used six popular supervised learning datasets as benchmarks. 
There are two relatively small datasets, Iris \cite{fisher1936use} and Digits~\cite{gates1972reduced}. Iris has 135 training and 15 test elements with 3 classes and 4-D features. Digits has 1,617 training and 180 test elements with 10 classes and 64-D features.
Since the baseline approach (Algorithm~\ref{alg:baseline}) can finish on these small datasets and thus obtain the ground truth (i.e., whether prediction is truly robust), these small datasets are useful in evaluating the accuracy of our method.

The other four benchmarks are larger datasets, including HAR (human activity recognition using smartphones)~\cite{anguita2013public}, which has  9,784 training and 515  test elements with 6  classes and 561-D features, 
Letter (letter recognition)~\cite{frey1991letter}, which has  18,999  training and 1,000  test elements with 26 classes and 16-D features, 
MNIST (hand-written digit recognition)~\cite{lecun1998gradient}, which has 60,000  training and 10,000 test elements with 10 classes and 36-D features,  
and CIFAR10 (colored image classification)~\cite{krizhevsky2009learning}, which has 50,000 training and 10,000  test elements with 10 classes and 288-D features.  
Since none of these datasets can be handled by the baseline approach, they are used primarily to evaluate the efficiency of our method.
%

\subsection{Evaluation Criteria}

Our experiments aimed to answer the following three research questions: 
\begin{itemize}
\item [RQ1] 
Is our method accurate enough for deciding (certifying or falsifying) $n$-poisoning robustness for most of the test cases? 
%
%
\item [RQ2]
Is our method efficient enough for handling all of the datasets used in the experiments?  
%
%
\item [RQ3]
How often can prediction be successfully certified or falsified by our method, and how is the result affected by the poisoning threshold $n$?
%
\end{itemize}

We used the state-of-the-art implementation of KNN in our experiments, with 10-fold cross validation and candidate $K$ values in the range $1 \sim \frac{1}{10}|T|$.
The set $T$ is obtained by inserting up-to-$n$ malicious samples to the datasets. We first generate a random number $n' \leq n$, and then insert exactly $n'$ mutations of randomly picked input features and output labels of the original samples.

We ran all four methods on all datasets.  
%
For the slow baseline, we set the time limit to 7200 seconds per test input.  
For the other methods, we set the time limit to 1800 seconds per test input.  
%
Our experiments were conducted (single threaded) on a CloudLab~\cite{Duplyakin+:ATC19} c6252-25g node with 16-core AMD 7302P at 3 GHz CPU and 128GB EEC Memory (8 x16 GB 3200MT/s RDIMMs).

\subsection{Results on the Smaller Datasets}

To answer RQ1, we compared the result of our method with the ground truth obtained by the baseline enumerative method on the two smallest datasets.

Table~\ref{tbl:small} shows the experimental results. Columns~1-2 show the name of the dataset, the poisoning threshold $n$, and the number of test data.  Columns~3-6 show the result of the baseline method, including the number of test data that are certified, falsified, and unknown, respectively, and the average  time per test input. 
The remaining columns compare the results of the two existing methods and our method. 
Since the goal is to compare our method with the ground truth (obtained by the baseline method), we must choose small $n$ values to ensure that the baseline method does not time out.

On Iris ($n=2$), the baseline method was able to certify 14/15 of the test data and falsify 1/15. However, it was slow: the average time was 3,086 seconds per test input.
In contrast, the method by Jia et al.~\cite{jia2020certified} was much faster, albeit with low accuracy. It took 1 second per test input, but failed to certify any of the test data.   
The method by Li et al.~\cite{li2022proving} certified 11/15 of the test data but left 4/15 as unknown.
Our method certified 14/15 of the test data and falsified the remaining 1/15, and thus is as accurate as the ground truth; the average time is 5 seconds per test input.

While the slow baseline method was able to handle Iris, it did not scale well.  With a slightly larger dataset or larger poisoning threshold, it would run out of time.  
On Digits ($n$=1), the baseline method falsified only 1/180 of the test data and returned the remaining 179/180 as unknown.
In contrast, our method successfully certified or falsified all of the 180 test data.

\begin{table}
\centering
\caption{Comparing the accuracy and efficiency of our method with  existing methods on all datasets, with large poisoning thresholds;
the percentages of \emph{certified} and \emph{falsified} cases are reported in Section~\ref{sec:poison-impact} and shown in Figure~\ref{fig:certified}.
}
\label{tbl:our}
\scalebox{0.75}{
\begin{tabular}{|l|c|c|c|c|c|c|c|}
  \hline
   \multicolumn{2}{|c|}{Benchmark} & \multicolumn{2}{c|}{Jia et al.~\cite{jia2020certified}}  
    				   & \multicolumn{2}{c|}{Li et al.~\cite{li2022proving}} 
   				   & \multicolumn{2}{c|}{Our Method} 
\\\hline
    dataset  & poisoning &  unknown   &  time &  unknown  & time  & unknown  & time 
\\
             &  threshold  	&   \%        &   (s) &  \%       &  (s)  & \%       &  (s)
\\\hline\hline

Iris    & $n=$3 (2\%)      & 100\%      & 1	& 26.7\%	& 1	& 6.7\%	&  120   \\\hline
Digits  & $n=$16 (1\%)     & 100\%  	& 1	& 19.4\%	& 1	& 1.0\%	&  19    \\\hline
HAR     & $n=$97 (1\%)     & 100\%  	& 1	& 28.3\%	& 1	& 0.8\%	& 21 \\\hline
Letter  & $n=$190 (1\%)    & 100\% 	& 1	& 94.5\%	& 1	& 0.0\%	& 4\\\hline
MNIST   & $n=$180 (0.3\%)  & 38.1\% 	& 1  	& 25.0\%	& 1	& 2.0\% & 47 \\\hline
CIFAR10 & $n=$150 (0.3\%)  & 90.0\% 	& 1	& 64.0\%	& 1	& 0.0\%	& 558 \\\hline
\end{tabular}
}
\end{table}

\subsection{Results on All Datasets}

To answer RQ2, we compared our method with the two state-of-the-art methods~\cite{jia2020certified,li2022proving} on all datasets, using significantly larger poisoning thresholds. 
Since these benchmarks are well beyond the reach of the baseline method, we no longer have the ground truth. However, whenever our method returns \emph{Certified} or \emph{Falsified}, the results are guaranteed to be conclusive. Thus, the \emph{Unknown} cases are the only unresolved cases.
If the percentage of \emph{Unknown} cases is small, it means our method is accurate.

Table~\ref{tbl:our} shows the results, where Column~1 shows the name of the dataset, and Column~2 shows the poisoning threshold.   
For the smallest dataset,  we set $n$ to be 2\% of the size of $T$.  For medium datasets, we set it to be 1\%.  For large datasets, we set it to be 0.3\%.

Columns~3-6 show the percentage of test data left as \emph{unknown} by the two existing methods and the average time taken.  Recall that these methods can only certify, but not falsify, $n$-poisoning robustness. 

Columns~7-8 show the percentage of test data left as \emph{unknown} by our method.  While our method has a higher computational cost, it is also drastically more accurate than the two existing methods.

On HAR, for example, the existing methods left 100\% and 28.3\% of the test data as unknown when $n=97$.
Our method, on the other hand, left only 0.8\% of the test data as unknown.

On CIFAR10, which has 50,000 data elements with 288-D feature vectors, our method was able to resolve 100\% of the test cases when the poisoning threshold was as large as $n=150$. 
In contrast, the two existing methods resolved only 10.0\% and 36.0\%.  
In other words, they left 90.0\% and 64.0\% as unknown.

\subsection{Effectiveness of Our Method and Impact of the Poisoning Threshold}
\label{sec:poison-impact}

To answer RQ3, we studied the percentages of \emph{certified}, \emph{falsified}, and \emph{unknown} cases reported by our method, as well as how they are affected by the poisoning threshold $n$.

In addition to the percentage of \emph{unknown} cases shown in Table~\ref{tbl:our}, we show the percentages of \emph{certified} and \emph{falsified} cases reported by our method below. There is no need to report these percentages for the two existing methods, because they always have 0\% of \emph{falsified} cases.

\vspace{2ex}
\noindent
\scalebox{0.75}{
\begin{tabular}{l|c|c|c}
\hline
    dataset  & poisoning threshold  &  certified by our method  & falsified by our method \\
\hline
Iris    & $n=$3 (2\%)      & 86.6\%     &  6.7\%	\\\hline
Digits  & $n=$16 (1\%)     & 80.0\%  	& 19.0\%	\\\hline
HAR     & $n=$97 (1\%)     & 71.8\%     & 26.8\%	\\\hline
Letter  & $n=$190 (1\%)    & 5.6\% 	& 94.4\%	\\\hline
MNIST   & $n=$180 (0.3\%)  & 75.0\% 	& 23.0\%	\\\hline
CIFAR10 & $n=$150 (0.3\%)  & 36.0\% 	& 64.0\%	\\\hline
\end{tabular}
}
\vspace{2ex}

Figure~\ref{fig:certified} shows  how these percentages are affected by the poisoning threshold. 
Here, the $x$-axis shows $n/|T|$ in percentage, and the $y$-axis shows the percentages of \emph{falsified} in `$-$', \emph{unknown} in `$.$' and \emph{certified} in either `$|$' (quick certify) or `$/$' (slow certify).

\begin{figure}
     \centering
         \begin{subfigure}[b]{0.23\textwidth}
         \centering
         \includegraphics[width=\textwidth]{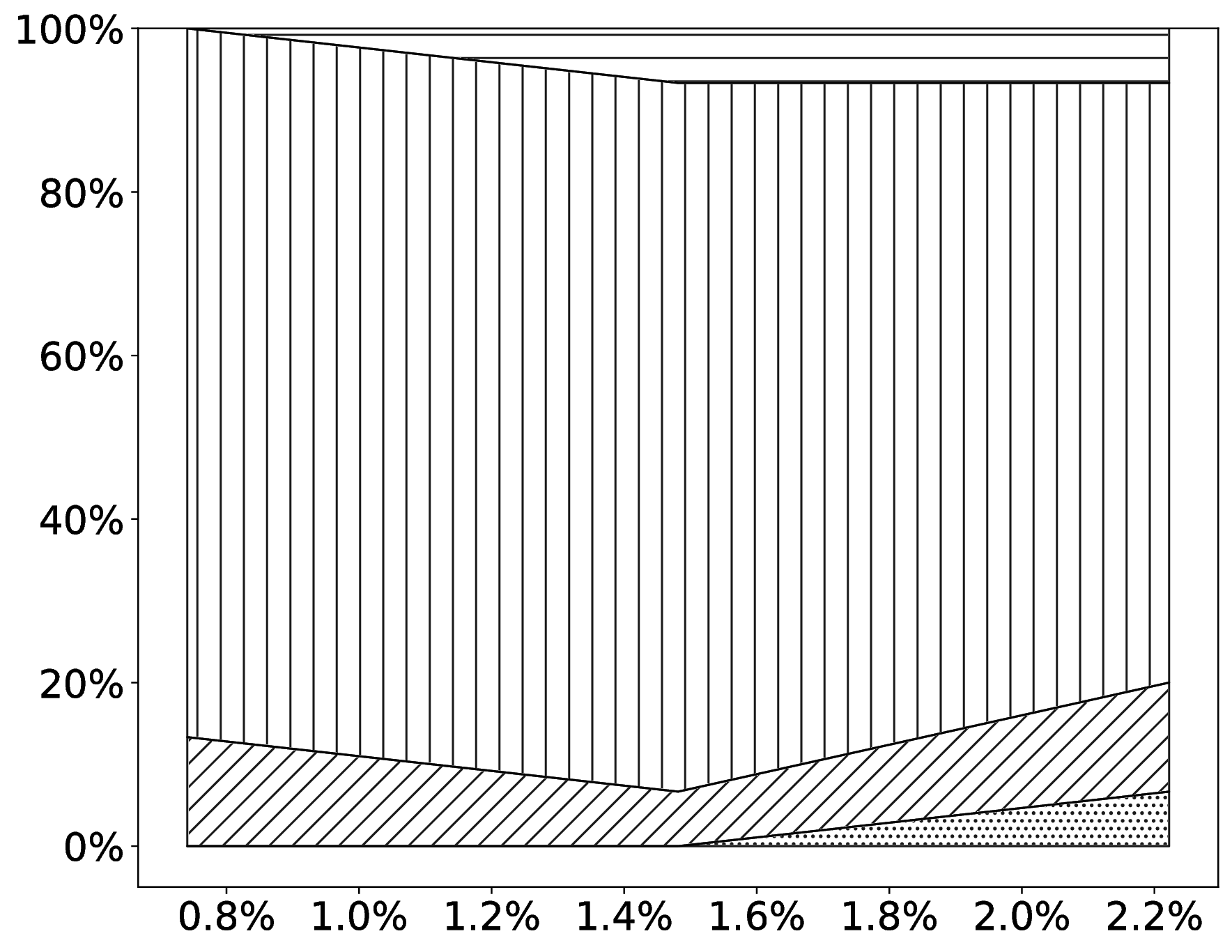}
         {\scriptsize (a) Iris}
         \end{subfigure}
      \hfill
     \begin{subfigure}[b]{0.23\textwidth}
         \centering
         \includegraphics[width=\textwidth]{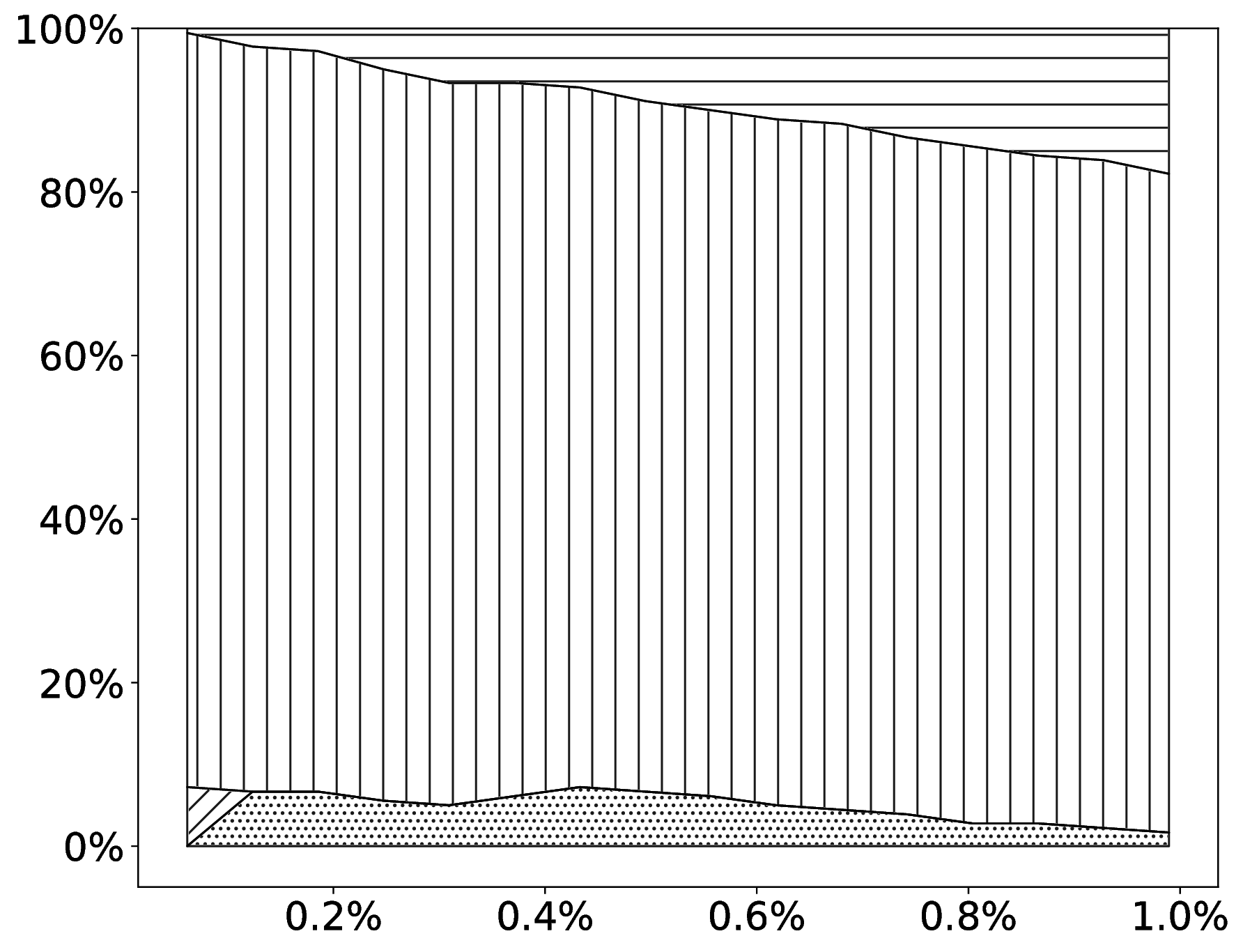}
         {\scriptsize (b) Digits}
            \end{subfigure}
     \hfill
         \begin{subfigure}[b]{0.23\textwidth}
         \centering
         \includegraphics[width=\textwidth]{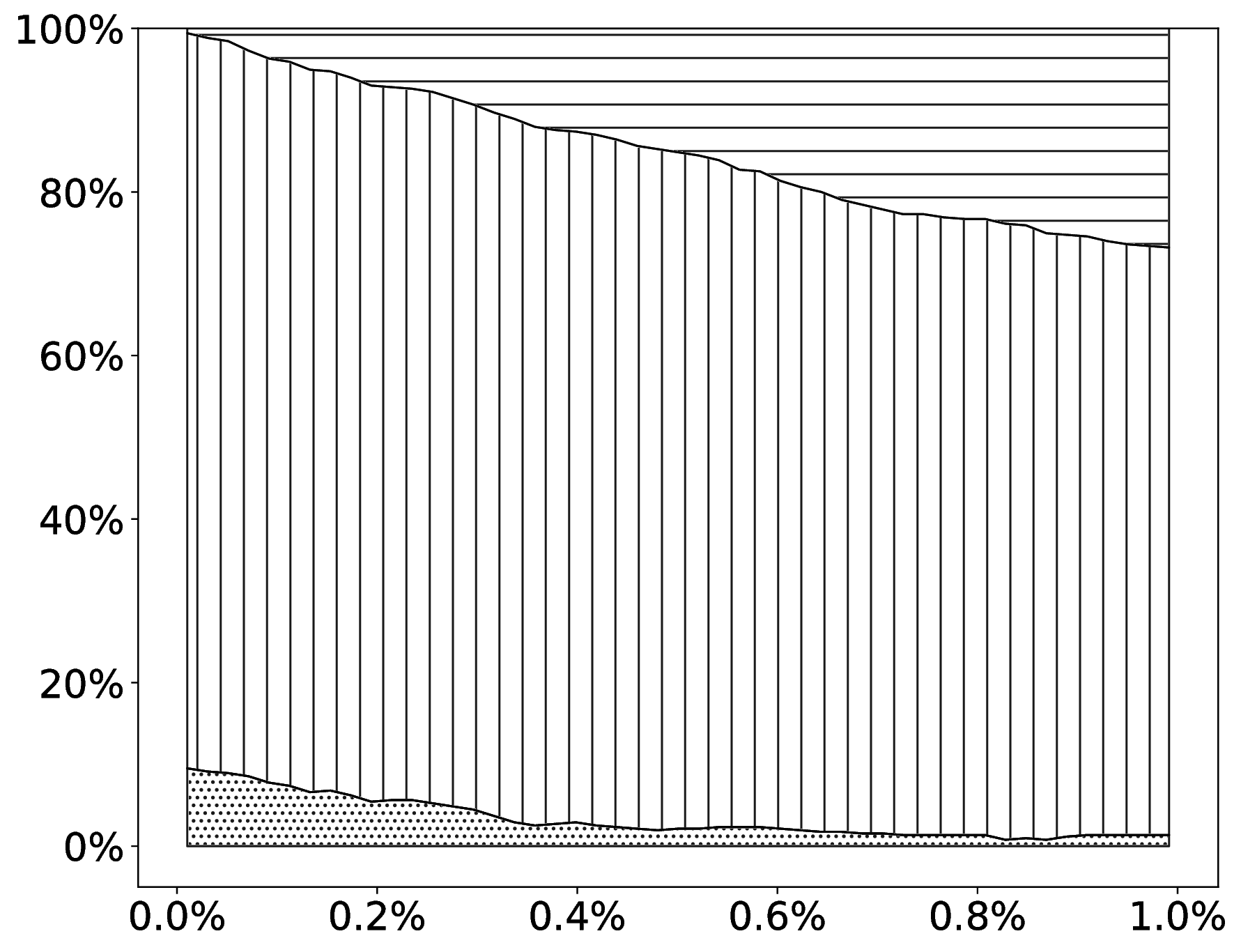}
         {\scriptsize (c) HAR}
          \end{subfigure}
     \hfill
         \begin{subfigure}[b]{0.23\textwidth}
         \centering
         \includegraphics[width=\textwidth]{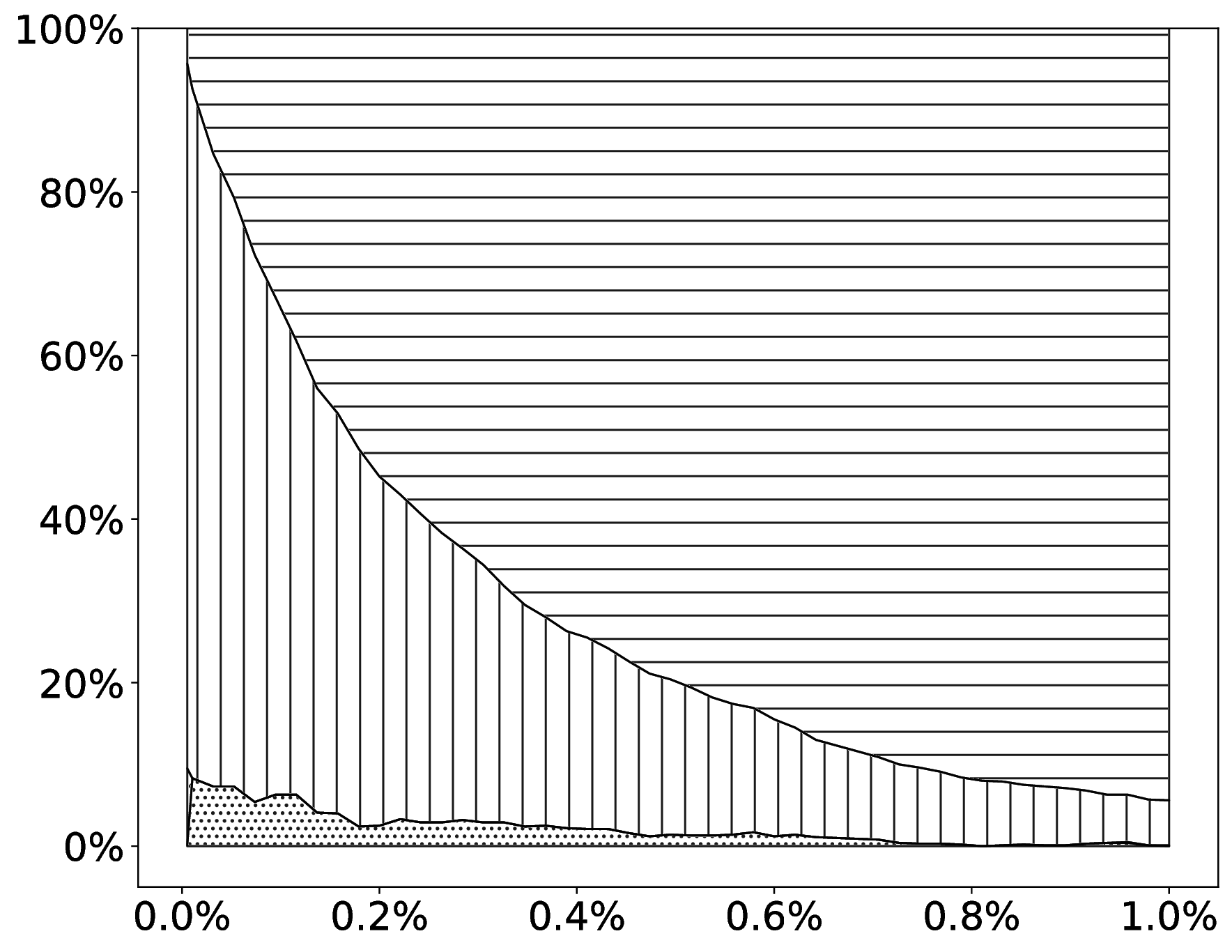}
         {\scriptsize (d) Letter}
          \end{subfigure}
       \hfill
         \begin{subfigure}[b]{0.23\textwidth}
         \centering
         \includegraphics[width=\textwidth]{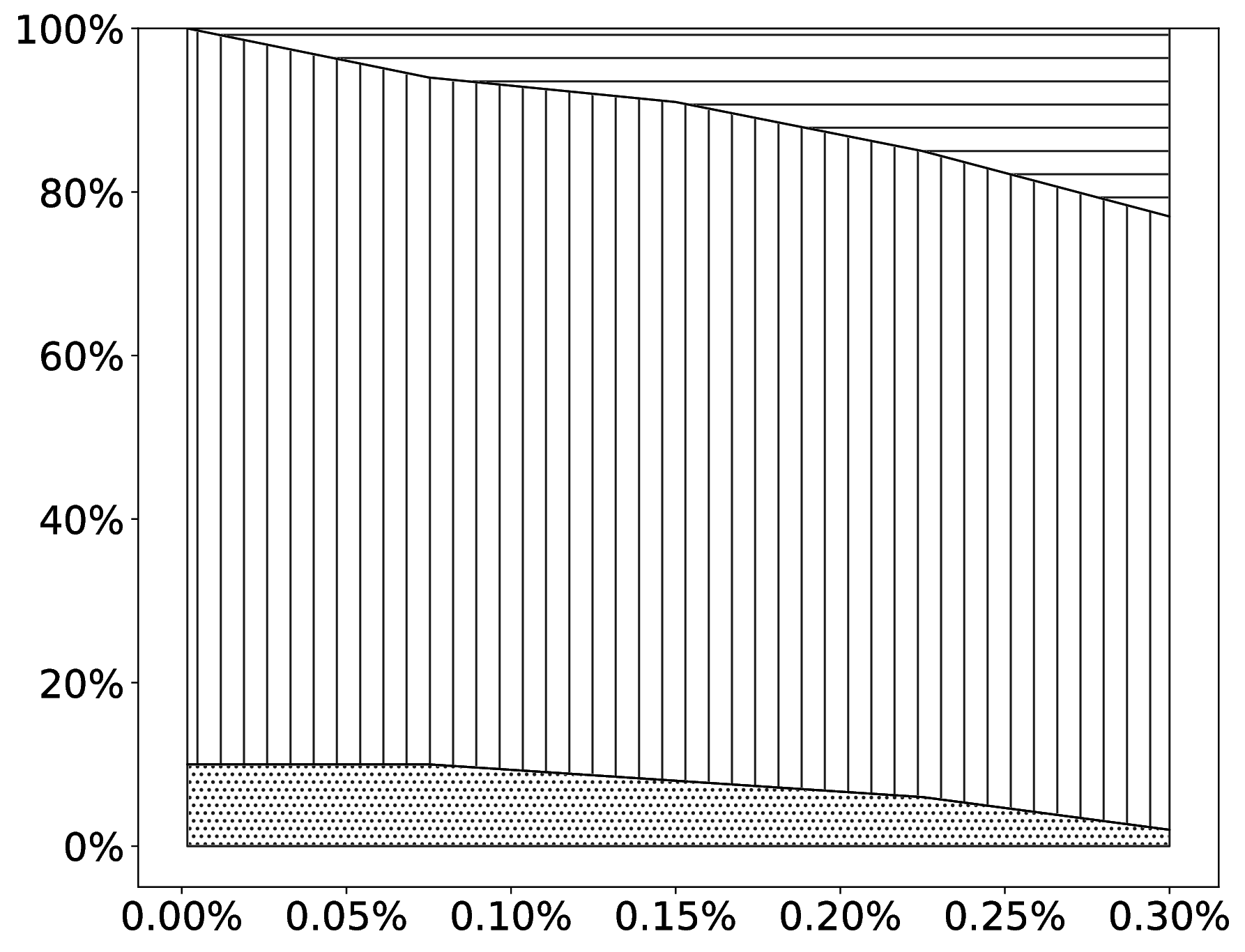}
          {\scriptsize {(e) MNIST}}
          \end{subfigure}
            \hfill
         \begin{subfigure}[b]{0.23\textwidth}
         \centering
         \includegraphics[width=\textwidth]{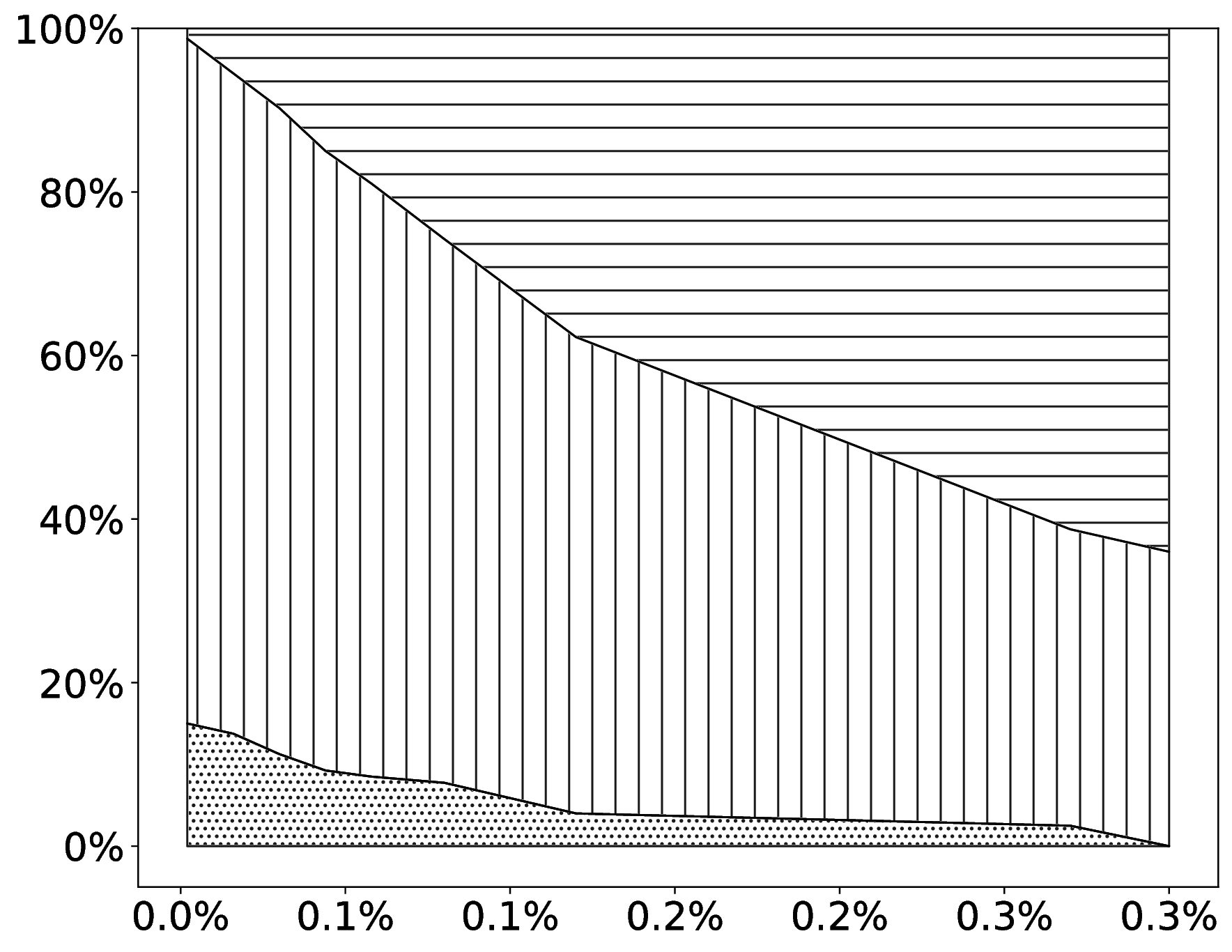}
         {\scriptsize (f) CIFAR10}
     \end{subfigure}
\caption{Results on how the poisoning threshold (in the $x$-axis) affects the percentages of certified, falsified, and unknown test cases (in the $y$-axis) in our method. Here, \emph{falsified} is in `$-$', \emph{unknown} is in `$.$', and \emph{certified} is in either `$|$' (quick certify) or `$/$' (slow certify). }
\label{fig:certified}
\end{figure}

Recall that in Algorithm~\ref{alg:our_method}, a test case may be certified in either Line~2 or Line~16.  When it is certified in Line~2, it belongs to the `$|$' region (quick certify) in Figure~\ref{fig:certified}.  When it is certified in Line~16, it belongs to the `$/$' region (slow certify).

For example, in Figure 6(e): When $n$=1, the falsify percentage is 0\%, the unknown percentage is 10\% and the quick-certify percentage is 90\%. When $n$=180, the falsify percentage is 23\%,  the unknown percentage is 2\%, and the quick-certify percentage is 75\%.

Figure~\ref{fig:certified} demonstrates the effectiveness of our method. Since the `$.$' regions that represent \emph{unknown} cases remain small, the vast majority of cases are successfully certified or falsified.

The results also reflect the nature of $n$-poisoning robustness:  as $n$ increases, the percentage of truly robust cases decreases. 
This is inevitable since having more poisoned elements in $T$ leads to  a higher likelihood of changing the classification label.  This is consistent with the results of prior studies \cite{shafahi2018poison,biggio2012poisoning,chen2017targeted}, which found that the prediction errors became significant even if a small percentage ($<0.2\%$) of training data in $T$ was poisoned. 


%

\section{Related Work}
\label{sec:related}

As explained earlier, while there has been prior work on certifying data-poisoning robustness for KNN, 
none of the existing methods can falsify the robustness property.
Thus, our method is the only one that can generate both certification and falsification results with certainty, and can handle both the learning and the prediction phases of a state-of-the-art KNN system.   In contrast, existing techniques such as Wang et al.~\cite{wang2018analyzing}, Jia et al.~\cite{jia2020certified,jia2020intrinsic}, and Weber et al.~\cite{weber2020rab} can only certify, but not falsify the robustness property.  Thus, in the presence of violations, these methods would remain inconclusive. Our method, on the other hand, can successfully resolve the robustness problem for most of the test inputs, as shown by our experimental evaluation.

KNN is not the only machine learning algorithm that is vulnerable to data poisoning. Other machine learning algorithms that are also found to be vulnerable to data poisoning include regression models~\cite{mei2015using}, support vector machines (SVM)~\cite{biggio2012poisoning,xiao2015support,xiao2012adversarial}, clustering algorithms~\cite{biggio2014poisoning}, and neural networks~\cite{shafahi2018poison,suciu2018does,demontis2019adversarial,zhu2019transferable}. So far, there has been no generic techniques for deciding the robustness property for all machine learning algorithms. 
Techniques have also been proposed to defend against data-poisoning attacks~\cite{steinhardt2017certified, tran2018spectral,   jagielski2018manipulating, feng2014robust, biggio2011bagging, bahri2020deep}, as well as to evaluate the effectiveness of defense  techniques~\cite{koh2018stronger, ma2019data} such as data sanitization~\cite{koh2018stronger} and differentially-private countermeasures~\cite{ma2019data}. 
Along this line, there is a growing interest in studying certified defenses~\cite{rosenfeld2020certified, levine2020deep,jia2020intrinsic} where robustness can be guaranteed either probabilistically or in a deterministic manner.

At a higher level, our method for using over-approximate analysis to narrow down the search space is analogous to static analysis techniques based on abstract interpretation~\cite{CousotC77}, which have been used to verify properties of both software programs~\cite{Wu019,SungKW17,Kusano016} and machine learning models~\cite{MohammadinejadP21,PaulsenWW20,PaulsenWWW20}, including robustness to data bias~\cite{MeyerAD21} and individual fairness~\cite{LiWWcav23}.
Furthermore, our method for detecting robustness violations is analogous to techniques used in bug-finding tools based on program verification and state space reduction~\cite{kroening2010verification,beyer2017software}. However, none of these techniques was designed to certify or falsify data-poisoning robustness of machine learning based systems.

Our method for using systematic testing to find robustness violations is related to the idea of fuzz testing~\cite{takanen2018fuzzing,miller1995fuzz} in the sense that mutations are used to generate violation-inducing inputs.
There is a large number of fuzz testing tools including AFL~\cite{MichalAFL},  honggfuzz~\cite{Honggfuzz}, libFuzzer~\cite{LibFuzzer}, SYMFUZZ~\cite{cha2015program}, and Driller~\cite{stephens2016driller}.
However, these tools focus primarily on search space pruning and search prioritization, e.g., by leveraging the syntax and semantics of the software code, but for KNN, the situation is significantly more complex.  This is because mutations of the training data can lead to drastic changes of the behavior of the underlying algorithm, during both the KNN inference phase and the KNN learning phase.
Thus, while existing techniques from the fuzz testing literature are inspiring, they are not directly applicable to this problem.

\section{Conclusion}
\label{sec:conclusion}

We have presented a method for deciding $n$-poisoning robustness accurately and efficiently for the state-of-the-art implementation of the KNN algorithm.   
To the best of our knowledge, this is the only method available for certifying as well as falsifying the complete KNN system, including both the learning and the prediction phases. 
Our method relies on novel techniques that first narrow down the search space using over-approximate analysis in the abstract domain, and then find violations using systematic testing in the concrete domain.
We have evaluated the proposed techniques on six popular supervised-learning datasets, and demonstrated the advantages of our method over two state-of-the-art techniques. 
\textcolor{black}{
Besides KNN, our method for over-approximating the impact of poisoning on the nearest neighbors is applicable to other distance-based machine learning classifiers and algorithms based on majority voting. Furthermore, since cross validation is a widely used parameter tuning technique in machine learning systems, our method for over-approximating cross validation is also applicable to other systems that rely on cross validation as a subroutine.
}

\begin{acks}
We thank the anonymous reviewers for their valuable feedback.
This work was partially funded by the U.S.\ NSF grants CNS-1702824 and CCF-2220345.
\end{acks}

\bibliographystyle{ACM-Reference-Format}
\bibliography{references}

\end{document}